\def\jnPDFEXPORTVERSION{arxiv}
    \newtheorem{theorem}{Theorem}
    \newtheorem{corollary}{Corollary}
    \newtheorem{definition}{Definition}
\newcommand*\circled[1]{\tikz[baseline=(char.base)]{
            \node[shape=circle,draw,inner sep=2pt] (char) {\footnotesize #1};}}
\tikzset{
    txPromised/.append style={
        dashed,
    },
    txDelivered/.append style={
        solid,
        thick,
    },
    txNone/.append style={
        draw=none,
    },
}
\tikzstyle{fcFunds} = [circle,draw,node distance=0.75cm and 1cm]
\tikzstyle{fcDecision} = [rectangle,draw,node distance=0.75cm and 1cm,text width=3.5cm,text centered]
\tikzstyle{fcPayout} = [node distance=0.75cm and 1cm]
\tikzstyle{fcLine} = [-Latex]
\pgfplotsset{
    discard if not/.style 2 args={
        x filter/.code={
            \edef\tempa{\thisrow{#1}}
            \edef\tempb{#2}
            \ifx\tempa\tempb
            \else
                
            \fi
        }
    },
    myresultsplot01/.style={
        legend style={
            at={(0.5,1.15)},
            anchor=north,
            legend columns=-1,
            draw=none,
            /tikz/every even column/.append style={column sep=1em},
            cells={align=left},
            font=\footnotesize,   %
        },
        enlarge x limits=0.04,
        ybar,
        bar width=1.1mm,
        xtick=data,
        width=\linewidth,
        height=0.6\linewidth,
        ymin=0,
        grid=both,
        minor y tick num=4,
        minor x tick num=1,
        major grid style={solid,draw=gray!50},
        minor grid style={densely dotted,draw=gray!50},
        label style={font=\footnotesize},   %
        tick label style={font=\footnotesize},   %
    },
}
\DeclareSIUnit{\tx}{tx}
\DeclareSIUnit{\nde}{node}
\DeclareSIUnit{\xrp}{funds}
\newcommand{\drawrandom}[0]{\ensuremath{\xleftarrow{\mathrm{R}}}}
\newcommand{\cryptoSK}[1]{\ensuremath{\mathsf{sk}_{#1}}}
\newcommand{\cryptoPK}[1]{\ensuremath{\mathsf{pk}_{#1}}}
\newcommand{\cryptoSign}[2]{\ensuremath{\mathsf{sig}_{#1}\left(#2\right)}}
\newcommand{\emojiabstl}[0]{📆}
\newcommand{\emojireltl}[0]{⏳}
\newcommand{\emojisigcheck}[0]{🔒}
\newcommand{\emojisig}[0]{🖋}
\newcommand{\emojihash}[0]{🧩}
\newcommand{\emojipreimage}[0]{🆗}
\newcommand{\emojibob}[0]{👨}
\newcommand{\emojialice}[0]{👩}
\newcommand{\emojiingrid}[0]{👤}
\begin{document}

\ifthenelse{\equal{\jnPDFEXPORTVERSION}{springer}}{
    \title{%
    Boomerang: Redundancy Improves Latency and Throughput in Payment-Channel Networks%
    }
}{
    \title{%
    Boomerang: Redundancy\\Improves Latency and Throughput\\in Payment-Channel Networks%
    }
}

\ifthenelse{\equal{\jnPDFEXPORTVERSION}{springer}}{
    \titlerunning{%
    Boomerang: Redundancy in Payment-Channel Networks%
    }
}{}

\author{%
Vivek Bagaria\textsuperscript{*}%
\and
Joachim Neu\textsuperscript{*}%
\and
David Tse%
}

\ifthenelse{\equal{\jnPDFEXPORTVERSION}{springer}}{
    \authorrunning{V. Bagaria \and J. Neu \and D. Tse}
}{}

\ifthenelse{\equal{\jnPDFEXPORTVERSION}{springer}}{
    \institute{%
    Stanford University, Stanford, USA%
    \\%
    \email{\{vbagaria,jneu,dntse\}@stanford.edu}%
    }
}{
    \publishers{\normalsize%
    Stanford University, Stanford, USA%
    \\%
    \texttt{\{vbagaria,jneu,dntse\}@stanford.edu}%
    }
    \date{}
}

\maketitle

\begingroup
\renewcommand\thefootnote{*}%
\footnotetext{Contributed equally and listed alphabetically.}
\endgroup

\begin{abstract}
In multi-path routing schemes
for payment-channel networks,
Alice transfers funds to Bob
by splitting them into partial payments
and routing them along multiple paths.
Undisclosed channel balances
and mismatched transaction fees
cause delays and failures
on
some
payment paths.
For atomic transfer schemes,
these straggling paths stall the whole transfer.
We show that the latency of transfers reduces when redundant payment paths are added.
This
frees up
liquidity
in
payment channels
and hence increases the throughput of the network.
We devise \emph{Boomerang},
a
generic
technique to be used on top of multi-path routing schemes to construct
redundant payment paths free of counterparty risk.
In our experiments,
applying Boomerang to
a baseline routing scheme
leads to
\SI{40}{\percent} latency reduction and
2x throughput increase.
We build on ideas from publicly verifiable secret sharing,
such that
Alice learns a secret of Bob
iff Bob overdraws funds from the redundant paths.
Funds are forwarded using Boomerang contracts,
which allow Alice to revert the transfer
iff she has learned Bob's secret.
We implement the Boomerang contract in Bitcoin Script.

\ifthenelse{\equal{\jnPDFEXPORTVERSION}{springer}}{
    \keywords{%
    Payment-channel networks \and
    Redundancy \and
    Atomic multi-path \and
    Routing \and
    Throughput \and
    Latency \and
    Adaptor signatures %
    }
}{}
\end{abstract}

\section{Introduction}

\subsection{Payment Channels and Networks}
\label{sec:pcs-and-pns}

Blockchains
provide a method for maintaining a distributed ledger
in a decentralized and trustless fashion \cite{nakamoto_bitcoin:_2008}.
However, these so called layer-1 (L1) consensus mechanisms suffer
from low throughput
and high confirmation latency.
From a scaling perspective it would be desirable
if transactions could be processed `locally'
by only the involved participants.

Payment channels (PCs)
\cite{%
decker_fast_2015,%
poon_bitcoin_2016-1,%
decker_eltoo:_2018%
}
provide this. %
Once two participants have established a PC on-chain, %
they can %
transact through the channel off-chain without involving L1 every time.
Briefly,
the mechanism underlying the different PC implementations is as follows.
The participants escrow a pool of funds on-chain
which they can spend only jointly.
They can perform a transfer through the PC
by agreeing on an updated split of the shared funds
which reflects the new balances after the transfer
and can be enforced on-chain anytime.
Special care is taken that participants can only ever
execute the most recent agreement on-chain.
PCs improve the throughput, latency and privacy of a blockchain system.

Payment networks (PNs)
\cite{%
poon_bitcoin_2016-1,%
miller_sprites_2017,%
dziembowski_perun:_2017,%
khalil_revive:_2017,%
dziembowski_general_2018-1%
}
can be constructed on top of PCs.
PCs can be
linked
to establish a path between a source and a destination
via some intermediaries. %
Hash- and time-locked contracts (HTLC)
are used to perform transfers via intermediaries
without counterparty risk,
giving rise to so called layer-2 (L2) PNs.
To this end, the destination draws a secret (preimage) and reveals a one-way function of the secret (preimage challenge) to the source.
The source then initiates a chain of payments to the destination,
all conditional on the revelation of
a valid preimage.
The destination reveals the secret to claim the funds, setting the chain of payments in motion.
Net,
the destination is paid, the source pays, and the intermediaries are in balance,
up to a small service fee earned for forwarding.
For recent surveys on PCs and PNs, see
\cite{%
jourenko_sok:_2019,%
gudgeon_sok:_2019%
}.

\subsection{Routing in Payment Networks}
\label{sec:routing-payment-networks}

Routing algorithms for PNs find paths and forward funds
while optimizing objectives such as
throughput,
latency,
or transaction fees.
Recently, the routing problem has been studied extensively
and various routing algorithms have been proposed
\cite{%
poon_bitcoin_2016-1,%
prihodko_flare:_2016,%
malavolta_silentwhispers:_2016,%
roos_settling_2017,%
hoenisch_aodvbased_2018,%
sivaraman_routing_2018,%
di_stasi_routing_2018,%
wang_flash:_2019%
}.
Early on,
it has been discovered,
also through the Lightning Torch experiment,
that single-path routing restricts the maximum transfer size.%
\footnote{%
\url{https://diar.co/volume-2-issue-25/\#1} (Jun 2018),
\url{https://www.coindesk.com/its-getting-harder-to-send-bitcoins-lightning-torch-heres-why} (Mar 2019)%
}
Multi-path routing \cite{piatkivskyi_split_2018}
allows for a more flexible use of PC liquidity
and hence can accommodate larger transfers
by splitting transfers into partial payments
that are routed along multiple paths.
However,
while single-path payments are naturally \emph{atomic},
\ie,
they either succeed  or fail entirely,
simply sending multiple partial payments
can lead to half-way incomplete transfers.
Atomicity is important %
to keep payment networks manageable from a systems-design perspective,
and is achieved by
atomic multi-path payments (AMP, \cite{osuntokun_[lightning-dev]_2018}),
which most state-of-the-art routing algorithms rely on
and Lightning developers
are actively working towards.%
\footnote{%
\url{https://bitcoin.stackexchange.com/q/89475} (Jul 2019)%
}
Yet,
because of its
`everyone-waits-for-the-last' philosophy,
atomicity comes at a cost
for latency and throughput.

\subsection{Main Contributions}
\label{sec:main-contributions}

\begin{figure}
    \centering
    \begin{tikzpicture}
        \begin{axis}[
                xlabel={Avg. Time-To-Completion\\for Successful Transfers [\si{\second}]},
                ylabel={Avg. Success Throughput\\per Node [\si{\xrp\per\second\per\nde}]},
                label style={
                    align=center,
                },
                xlabel style={
                    align=left,
                    at={(1.05,0)},
                    anchor=north west,
                    yshift={11.4pt},
                },
                width=0.45\linewidth,
                height=0.45\linewidth,
                legend style={
                    at={(1.1,0.95)},
                    anchor=north west,
                    draw=none,
                    font=\footnotesize,   %
                },
                legend cell align={left},
                label style={font=\footnotesize},   %
                tick label style={font=\footnotesize},   %
            ]

            \addplot+ [
                mark size=2,
                mark=+,
                only marks,
            ] table [
                x expr=(\thisrow{ttc_for_successful_tx-mean}),
                y expr=(\thisrow{throughput_success-mean}),
                discard if not={algo}{retry-02-amp-2},
            ] {figures/02_nodes100_txs500_paths25_edges4.605170to6.907755_paths25.data};
            \addlegendentry{Baseline};

            \addplot+ [
                mark size=1,
                mark=*,
                every mark/.append style={fill=red},
                only marks,
            ] table [
                x expr=(\thisrow{ttc_for_successful_tx-mean}),
                y expr=(\thisrow{throughput_success-mean}),
                discard if not={algo}{redundancy-02-amp-2},
            ] {figures/02_nodes100_txs500_paths25_edges4.605170to6.907755_paths25.data};
            \addlegendentry{Boomerang};

            \pgfplotsset{cycle list shift=-2};

            \addplot+ [
                name path=pareto_retry,
                no marks,
                thick,
            ] table [
                x expr=(\thisrow{ttc_for_successful_tx-mean}),
                y expr=(\thisrow{throughput_success-mean}),
                discard if not={algo}{retry-02-amp-2},
            ] {figures/pareto_retry-02-amp-2.data};

            \addplot+ [
                name path=pareto_redundancy,
                no marks,
                thick,
            ] table [
                x expr=(\thisrow{ttc_for_successful_tx-mean}),
                y expr=(\thisrow{throughput_success-mean}),
                discard if not={algo}{redundancy-02-amp-2},
            ] {figures/pareto_redundancy-02-amp-2.data};

            \pgfplotsset{cycle list shift=-2};

            \path[name path=limits_pareto_retry] (axis cs:0.5266116624312419,0) -- (axis cs:1,0) -- (axis cs:1,44.55598869554187);
            \path[name path=limits_pareto_redundancy] (axis cs:0.3014100356469785,0) -- (axis cs:1,0) -- (axis cs:1,89.71919803553533);

            \addplot+ [
                fill=blue,
                fill opacity=0.1,
            ] fill between [
                of=pareto_retry and limits_pareto_retry,
            ];

            \addplot+ [
                fill=red,
                fill opacity=0.1,
            ] fill between [
                of=pareto_redundancy and limits_pareto_redundancy,
            ];

        \end{axis}

        \ifthenelse{\equal{\jnPDFEXPORTVERSION}{springer}}{
            \draw [-Latex] (3.5,3) -- ++(-0.7,0.7) node [midway,above,xshift=0.2em,yshift=-0.2em,rotate=-45] {\scriptsize better};
        }{
            \draw [-Latex] (4.2,3.5) -- ++(-0.9,0.9) node [midway,above,xshift=0.2em,yshift=-0.2em,rotate=-45] {\footnotesize better};
        }
    \end{tikzpicture}
    \caption{
    The
    blue (resp.~red)
    points mark
    the tradeoff between
    latency and throughput
    of the baseline (resp.~Boomerang) AMP routing scheme,
    obtained by varying an internal parameter.
    The Pareto fronts (lines) and the achievable
    regions (shaded) of the tradeoff are shown.
    Boomerang yields
    a 2x increase in throughput and
    a \SI{40}{\percent} decrease in latency
    over the baseline scheme.
    }
    \label{fig:sim-results-main-results-combined}
\end{figure}
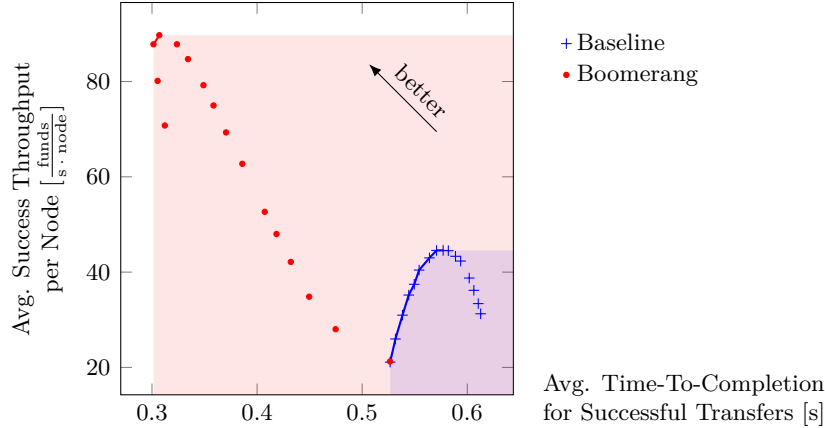

Due to the stochastic nature of PNs,
\ie,
random delays and failures of payment paths,
AMP routing often idles
while waiting for a few straggling paths.
This leads to a long time-to-completion (TTC) of transfers.
Furthermore,
already successful partial payments are kept pending,
seizing liquidity from the PN
that could have served
other transfers.
As a result, the throughput of the PN reduces.
For example,
consider a transfer
from Alice to Bob
of \$4 via 4 paths,
of which 3 succeed after \SI{1}{\second}
and 1 succeeds after \SI{4}{\second}.
The transfer has a TTC of \SI{4}{\second}
and consumes liquidity $\$4 \cdot \SI{4}{\second} = \$16 \cdot\si{\second}$.
Now
suppose instead
Alice could send 8 partial payments
of \$1 each (4 `extra' redundant paths),
such that
AMP completes once a quorum of 4 out of 8 paths succeeds
and Bob cannot steal any extra funds.
If
6 paths succeed after \SI{1}{\second}
and 2 succeed after \SI{4}{\second},
then
the transfer has a TTC of \SI{1}{\second}
and consumes liquidity $4 \cdot \$1 \cdot \SI{1}{\second} + 2 \cdot \$1 \cdot \SI{1}{\second} + 2 \cdot \$1 \cdot \SI{4}{\second} = \$14 \cdot\si{\second}$.
Thus,
the use of redundant payments
reduces the TTC
of AMP transfers.
As a result,
less liquidity is consumed
and the PN achieves
higher throughput.
Similar observations
about straggler mitigation using redundancy
have been made in large-scale distributed computing
\cite{dean_tail_2013,lee_speeding_2018,aktas_straggler_2019}.

We devise \emph{Boomerang},
a technique to be used on top of multi-path routing schemes to construct
redundant payment paths free of counterparty risk.
Building on ideas from publicly verifiable secret sharing,
we use a homomorphic one-way function
to intertwine the preimage challenges used for HTLC-type payment forwarding,
such that
Alice learns a secret of Bob
iff Bob overdraws funds from the redundant paths.
Funds are forwarded using Boomerang contracts,
which allow Alice to revert the transfer
iff she has learned Bob's secret.
We prove the Boomerang construction to be secure,
and present an implementation in Bitcoin Script
which applies
either adaptor signatures based on Schnorr or ECDSA signatures,
or elliptic curve scalar multiplication
as a one-way function.

We
empirically
verify
the %
benefits of Boomerang
for throughput and latency
of AMP routing.
For this purpose,
we choose a baseline routing scheme which resembles
the scheme currently used in Lightning.
We enhance this scheme with Boomerang.
As can be seen
in \figref{sim-results-main-results-combined},
redundancy
increases the throughput by 2x
and
reduces the TTC (latency) by \SI{40}{\percent}.
Our results suggest that redundancy is a generic tool
to boost the performance of
AMP routing algorithms.

\subsection{Paper Outline}
\label{sec:outline}

In \secref{preliminaries},
we recall cryptographic preliminaries,
introduce the system model,
and summarize causes for delay and failure of transfers in payment networks.
We devise the Boomerang construction for redundant transactions in \secref{boomerang-construction}, and prove it to be secure.
An implementation of Boomerang in Bitcoin Script is presented in \secref{implementation-bitcoin-script}.
In \secref{experimental-evaluation},
we demonstrate the utility of redundancy for improving routing protocols in experiments.

\section{Preliminaries}
\label{sec:preliminaries}

\subsection{System Model \& Terminology}
\label{sec:system-model}

The topology of a payment network (PN) is given by an undirected graph
whose nodes are agents, and whose edges are payment channels (PCs).
Each PC endpoint owns a share of the funds in the PC,
which we refer to as its liquidity or balance.
For privacy reasons it is undesirable to reveal balances
to third party nodes.
Following Lightning \cite[BOLT \#7]{poon_bitcoin_2016-1},
we assume that nodes have no information about PC balances when taking routing decisions,
but know the PN topology.
Finally,
we assume that PN nodes communicate in a peer-to-peer (P2P) gossiping fashion only along PCs,
\ie, PN topology is P2P network topology.

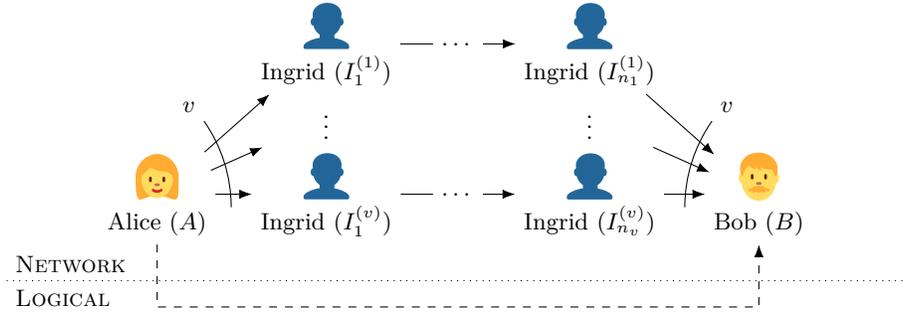
\begin{figure}[tb]
    \centering
    \begin{tikzpicture}
        \mathligsoff
        \ifthenelse{\equal{\jnPDFEXPORTVERSION}{springer}}{}{\footnotesize}
        
        \node [align=center] (a) at (-4,0) {{\Huge\emojialice{}}\\Alice ($A$)};
        \node [align=center] (b) at (+4,0) {{\Huge\emojibob{}}\\Bob ($B$)};
        
        \node [align=center] (i11) at (-1.75,2) {{\Huge\emojiingrid{}}\\Ingrid ($I_1^{(1)}$)};
        \node [align=center] (i1etc) at (0,2) {\dots};
        \node [align=center] (i1n) at (+1.75,2) {{\Huge\emojiingrid{}}\\Ingrid ($I_{n_1}^{(1)}$)};
        
        \node [align=center] (iV1) at (-1.75,0) {{\Huge\emojiingrid{}}\\Ingrid ($I_1^{(v)}$)};
        \node [align=center] (iVetc) at (0,0) {\dots};
        \node [align=center] (iVn) at (+1.75,0) {{\Huge\emojiingrid{}}\\Ingrid ($I_{n_v}^{(v)}$)};

        \node [align=center] (etc11) at (-1.75,+1) {\vdots};
        \node [align=center] (etc1n) at (+1.75,+1) {\vdots};

        \draw (i11) -- (i1etc);
        \draw [-Latex] (i1etc) -- (i1n);
        \draw [-Latex,shorten <=-0.5em] (a) -- (i11);
        \draw [-Latex,shorten >=-0.5em] (i1n) -- (b);
        \draw (iV1) -- (iVetc);
        \draw [-Latex] (iVetc) -- (iVn);
        \draw [-Latex] (a) -- (iV1);
        \draw [-Latex] (iVn) -- (b);

        \draw [-Latex,shorten >=2.3em,shorten <=-0.2em] (a) -- (etc11);
        \draw [-Latex,shorten <=2.3em,shorten >=-0.2em] (etc1n) -- (b);

        \draw (a) ++(-10:1) arc (0:35:2) node [above left] {$v$};
        \draw (b) ++(190:1) arc (180:145:2) node [above right] {$v$};

        \draw [-Latex,dashed] (a) -- ++(0,-1.5) -| (b);

        \draw [dotted] (-6,-1.15) -- (6,-1.15);
        \node [anchor=south west] at (-6,-1.15) {\textsc{Network}};
        \node [anchor=north west] at (-6,-1.15) {\textsc{Logical}};
        
    \end{tikzpicture}
    \caption{
    The TF from $A$ to $B$ (dashed) is implemented by $v$ TXs via intermediaries $(I_1^{(1)}, ..., I_{n_1}^{(1)})$, ..., $(I_1^{(v)}, ..., I_{n_v}^{(v)})$ (solid).
    }
    \label{fig:system-model}
\end{figure}

For multi-path routing, we highlight a strict separation between two layers in our terminology (\cf \figref{system-model}).
The `logical' layer
is concerned with \emph{transfers} (TFs) of an amount $v$ of funds
from a \emph{source} (Alice, $A$)
to a \emph{destination} (Bob, $B$).
The `network' layer
implements a TF through
multiple \emph{transactions} (TXs)
from the \emph{sender}
to the \emph{receiver}
along different paths
through multiple
intermediaries (Ingrid, $I_i$).
A preimage $p_i$
is used
for HTLC-type forwarding of the $i$-th TX.
The amounts of the TXs add up to the amount of the TF.
Routing algorithms
schedule a sequence of TXs in an attempt to implement a given TF.
Figures of merit are
\emph{throughput},
\ie, the average amount of funds transported successfully per time,
and
\emph{time-to-completion} (TTC),
\ie, the average delay between commencement and completion of a TF.

\subsection{Delay and Failure of Transactions in Payment Networks}

There are various causes for delay and failure of TXs in PNs:
\emph{a)} The liquidity of a PC along the desired path is insufficient.
\emph{b)} Insufficient fees do not incentivize intermediaries to forward.
\emph{c)} Queuing, propagation and processing delay of P2P messages.
\emph{d)} PN topology changes, nodes come and go, \eg, due to connectivity or maintenance.
\emph{e)} Governments or businesses attempt to censor certain TFs.

\subsection{Cryptographic Preliminaries}
\label{sec:cryptographic-preliminaries}

In this section, we briefly recapitulate the cryptographic tools used throughout the paper.
Let $\IG$ be a cyclic multiplicative group of prime order $q$ with a generator $g \in \IG$.
We assume that the
discrete logarithm problem (DLP, formally introduced in \secref{appendix-cryptographic-preliminaries})
is hard for $g$ in $\IG$,
which is commonly assumed to be the case,
\eg,
in certain elliptic curves (ECs)
used in Bitcoin.
Let $H\colon \IZ_q \to \IG$ with $H(x) := g^x$,
where $\IZ_q$ is the finite field of integers modulo $q$.
We require $H$ to be a one-way function,
which follows from
the DLP hardness assumption.
Given a \emph{preimage challenge} $h := H(x)$,
it is difficult to obtain the \emph{preimage} $x$,
but easy to check whether a purported preimage $\hat{x}$
satisfies the challenge (\ie, $H(\hat{x}) = h$).

$H$ has the following homomorphic property which we make extensive use of:
\begin{IEEEeqnarray}{rCl}
    \label{eq:homomorphic-H}
    \forall n \geq 1\colon
    \forall c_1, ..., c_n\colon
    \quad
    H\left( \sum_{i=1}^n c_i x_i \right) = \prod_{i=1}^n H(x_i)^{c_i}
\end{IEEEeqnarray}

\section{The Boomerang Construction}
\label{sec:boomerang-construction}

In this section, we show how to add redundant TXs without counterparty risk.
To this end, we build up the Boomerang construction in three steps.
First,
$B$ draws a secret $\alpha_0$.
We use the homomorphic property of $H$
and
ideas from
publicly verifiable secret sharing
\cite{shamir_how_1979,cohen_benaloh_secret_1987,feldman_practical_1987,pedersen_non-interactive_1992,schoenmakers_simple_1999}
to construct
the preimages $p_i$
and
the preimage challenges $H(p_i)$
such that
$A$ learns $\alpha_0$
iff $B$ overdraws funds from the redundant paths.
Second,
the so called Boomerang contract
serves as a building block
for contingent transfer of funds,
\ie,
HTLC-type forwarding with the additional provision
that $A$ can revert the TF
iff she learns $\alpha_0$.
Finally,
the end-to-end procedure of a Boomerang TF
is devised from the previously mentioned ingredients.
Subsequently, we prove that Boomerang
satisfies the relevant notions of security.

\subsection{Setup of Preimage Challenges}
\label{sec:setup-preimage-challenges}

\begin{figure}
    \centering
    \begin{tikzpicture}
        \mathligsoff
        \ifthenelse{\equal{\jnPDFEXPORTVERSION}{springer}}{}{\footnotesize}
        
        \node [align=center] (a) at (-4,0) {{\Huge\emojialice{}}\\Alice ($A$)};
        \node [align=center] (b) at (+4,0) {{\Huge\emojibob{}}\\Bob ($B$)};
        
        \node [align=center] (i11) at (-1.75,2) {{\Huge\emojiingrid{}}\\Ingrid ($I_1^{(1)}$)};
        \node [align=center] (i1etc) at (0,2) {\dots};
        \node [align=center] (i1n) at (+1.75,2) {{\Huge\emojiingrid{}}\\Ingrid ($I_{n_1}^{(1)}$)};
        
        \node [align=center] (iV1) at (-1.75,0) {{\Huge\emojiingrid{}}\\Ingrid ($I_1^{(v)}$)};
        \node [align=center] (iVetc) at (0,0) {\dots};
        \node [align=center] (iVn) at (+1.75,0) {{\Huge\emojiingrid{}}\\Ingrid ($I_{n_v}^{(v)}$)};
        
        \node [red,align=center] (iW1) at (-1.75,-2) {{\Huge\emojiingrid{}}\\Ingrid ($I_1^{(w)}$)};
        \node [red,align=center] (iWetc) at (0,-2) {\dots};
        \node [red,align=center] (iWn) at (+1.75,-2) {{\Huge\emojiingrid{}}\\Ingrid ($I_{n_w}^{(w)}$)};
        
        \node [align=center] (etc11) at (-1.75,+1) {\vdots};
        \node [align=center] (etc1n) at (+1.75,+1) {\vdots};
        \node [red,align=center] (etc21) at (-1.75,-1) {\vdots};
        \node [red,align=center] (etc2n) at (+1.75,-1) {\vdots};
        
        \draw (i11) -- (i1etc);
        \draw [-Latex] (i1etc) -- (i1n);
        \draw [-Latex,shorten <=-0.5em] (a) -- (i11);
        \draw [-Latex,shorten >=-0.5em] (i1n) -- (b);
        \draw (iV1) -- (iVetc);
        \draw [-Latex] (iVetc) -- (iVn);
        \draw [-Latex] (a) -- (iV1);
        \draw [-Latex] (iVn) -- (b);
        \draw [red] (iW1) -- (iWetc);
        \draw [red,-Latex] (iWetc) -- (iWn);
        \draw [red,-Latex] (a) -- (iW1);
        \draw [red,-Latex] (iWn) -- (b);
        
        \draw [-Latex,shorten >=2.3em,shorten <=-0.2em] (a) -- (etc11);
        \draw [-Latex,shorten <=2.3em,shorten >=-0.2em] (etc1n) -- (b);
        \draw [red,-Latex,shorten >=2.3em] (a) -- (etc21);
        \draw [red,-Latex,shorten <=2.3em] (etc2n) -- (b);
        
        \draw (a) ++(-10:1) arc (0:35:2) node [above left] {$v$};
        \draw (b) ++(190:1) arc (180:145:2) node [above right] {$v$};
        
        \draw [red] (a) ++(-15:1.15) arc (0:-35:1.5) node [below left] {$u$};
        \draw [red] (b) ++(195:1.15) arc (180:215:1.5) node [below right] {$u$};
        
    \end{tikzpicture}
    \caption{
    The TF of amount $v$ from $A$ to $B$ is implemented by the first $v$ out of $w:=v+u$ TXs of unit amount \$1 via $(I_1^{(1)}, ..., I_{n_1}^{(1)})$, ..., $(I_1^{(w)}, ..., I_{n_w}^{(w)})$.
    }
    \label{fig:system-model-w-redundancy}
\end{figure}
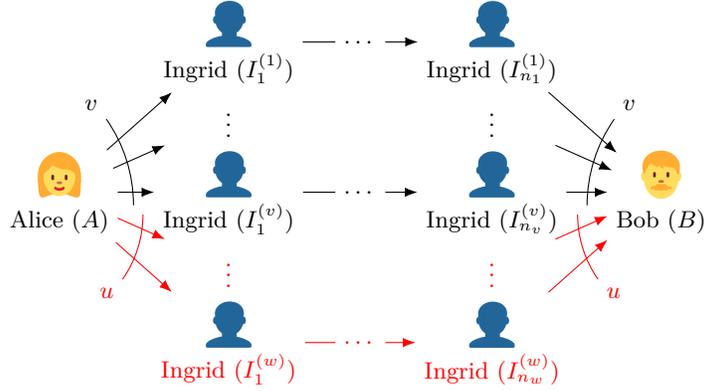

We assume that $A$ and $B$ have agreed out-of-band to partition their TF into $v$ TXs of a unit of funds (\$1) each, without loss of generality (w.l.o.g.).
In addition, they use $u$ redundant TXs to improve their TF,
so that the total number of TXs is $w := v + u$ (\cf \figref{system-model-w-redundancy}).
First,
$B$ chooses a polynomial $P(x)$ of degree $\operatorname{deg}(P) = v$ with coefficients $\alpha_0 \drawrandom \IZ_q, ..., \alpha_v \drawrandom \IZ_q$ drawn uniformly at random,
\begin{IEEEeqnarray}{rCl}
    \label{eq:polynomial}
    P(x) &=& \sum_{j=0}^{v} \alpha_j x^j.
\end{IEEEeqnarray}
Then, $B$ commits to $P(x)$ by providing $A$ with $H(\alpha_0), ..., H(\alpha_v)$.
Due to the homomorphic property of $H$,
\eqref{homomorphic-H},
$A$ can compute
\begin{IEEEeqnarray}{rCl}
    \label{eq:preimage-challenges-intertwining}
    \forall i \in \{1,...,v,...,w\}\colon
    \quad
        H(P(i))
        = H\left( \sum_{j=0}^{v} \alpha_j i^j \right)
        &=& \prod_{j=0}^v H(\alpha_j)^{\left(i^j\right)}.
\end{IEEEeqnarray}
For the $i$-th TX,
$p_i := P(i)$ is used as a preimage for HTLC-type forwarding.
Hence, $A$ uses $H(p_i) = H(P(i))$ from \eqref{preimage-challenges-intertwining}
as preimage challenge,
and informs $B$ out-of-band of which $i$ was used.

To redeem the $i$-th TX, $B$ reveals $p_i = P(i)$.
Should $B$ overdraw by revealing more than $v$ evaluations $p_i$ of $P(x)$,
then $\alpha_0, ..., \alpha_v$ can be recovered using polynomial interpolation due to $\operatorname{deg}(P) = v$.
Recall that $\alpha_0$ serves as a secret
which $A$ can use to revert the TF in this case.
As long as $B$ reveals no more than $v$ evaluations of $P(x)$,
each $\alpha_i$ remains marginally uniformly distributed.
In this case, the TF is final.

Note that $A$ and $B$ do not have to agree on $u$ ahead of time.
Instead,
$A$ can create virtually infinitely many $H(p_i)$ (as long as $w < q$),
and hence send a continuous flow of TXs until $B$ redeems $v$ TXs.
This property is similar to rateless codes used to
implement `digital fountains' \cite{DBLP:conf/sigcomm/ByersLMR98},
and enables $B$ to choose $u$ adaptively during execution of the TF. %
Furthermore,
since there is no risk in `losing control' over a redundant TX,
source routing can be abandoned
in favor of the PN taking distributed routing decisions.
Conceptually, the use of redundant TXs in PNs is analogous to
the use of erasure-correcting codes \cite{paper_Elias1956} in
communication networks consisting of packet erasure channels
\cite{DBLP:journals/rfc/rfc6330}
and for straggler mitigation in large-scale distributed computing and storage
\cite{dean_tail_2013,lee_speeding_2018}.

\subsection{Boomerang Contract}
\label{sec:boomerang-contract}

We devise the Boomerang contract,
which
implements reversible HTLC-type forwarding
as required for Boomerang
on top of a PC.
Still,
\wolog
\$1 funds and $\delta$ TX fee are to be forwarded from party $P_1$ to party $P_2$.

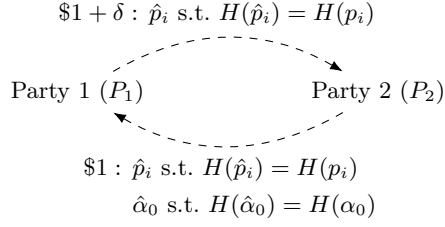
\begin{figure}
    \centering
    \begin{tikzpicture}
        \mathligsoff
        \ifthenelse{\equal{\jnPDFEXPORTVERSION}{springer}}{}{\footnotesize}
        
        \node (p1) at (-2,0) {Party $1$ ($P_1$)};
        \node (p2) at (+2,0) {Party $2$ ($P_2$)};
        \draw [-Latex,txPromised] (p1) to [bend left] node [midway,above] {
            $\begin{aligned}
            \$1+\delta: \:\:& \hat{p}_i \text{ s.t.\ } H(\hat{p}_i) = H(p_i)
            \end{aligned}$
            } (p2);
        \draw [-Latex,txPromised] (p2) to [bend left] node [midway,below] {
            $\begin{aligned}
            \$1: \:\:& \hat{p}_i \text{ s.t.\ } H(\hat{p}_i) = H(p_i)   \\
                & \hat{\alpha}_0 \text{ s.t.\ } H(\hat{\alpha}_0) = H(\alpha_0)
            \end{aligned}$
            } (p1);
        
    \end{tikzpicture}
    \caption{Boomerang contract used to forward \$1 plus $\delta$ TX fees from $P_1$ to $P_2$}
    \label{fig:boomerang-contract}
\end{figure}

Conceptually, the desired behavior could be accomplished by two conditional forwardings (\cf \figref{boomerang-contract}),
the first of which (`forward', top arrow in \figref{boomerang-contract})
transfers $\$1+\delta$ from $P_1$ to $P_2$ upon revelation of a preimage $\hat{p}_i$ such that
$H(\hat{p}_i) = H(p_i)$,
and the second of which (`reverse', bottom arrow in \figref{boomerang-contract})
transfers $\$1$ back from $P_2$ to $P_1$ upon revelation of two preimages $\hat{p}_i$ and $\hat{\alpha}_0$ such that $H(\hat{p}_i) = H(p_i)$ and $H(\hat{\alpha}_0) = H(\alpha_0)$.
Note that the two conditions are nested such that
if $P_1$ redeems the second forwarding,
then $P_2$ can redeem the first forwarding.
As a result, the Boomerang contract has three possible outcomes.
Either,
\emph{a)} neither forwarding is redeemed and $P_1$ retains all funds (\eg, in the case of a timeout or an unused redundant TX),
or
\emph{b)} only the `forward' forwarding is redeemed (\ie, $B$ draws funds by revealing $P(i)$ but does not leak $\alpha_0$),
or
\emph{c)} both forwardings are redeemed (\ie, $B$ overdraws and reveals both $P(i)$ and $\alpha_0$).
Anyway, %
$P_2$ cannot loose funds and thus agrees to deploy the contract on the PC.

\begin{figure}%
    \begin{subfigure}[t]{0.48\linewidth}
        \centering
        \begin{tikzpicture}
            \mathligsoff
            \ifthenelse{\equal{\jnPDFEXPORTVERSION}{springer}}{}{\footnotesize}
            
            \node [fcFunds] (in) at (0,0) {\$1};
            \node [fcDecision,node distance=0.5cm and 1cm,below=of in] (decision1) {$P_2$ reveals $\hat{p}_i$ s.t.\ $H(\hat{p}_i) = H(p_i)$ \\ within $t\in[T_0, T_0{+}\Delta_{\mathrm{fwd}}]$};
            \node [fcDecision,below=of decision1] (decision2) {$P_1$ reveals $\hat{\alpha}_0$ s.t.\ $H(\hat{\alpha}_0) = H(\alpha_0)$ \\ within $t\in[T_0, T_0{+}\Delta_{\mathrm{rev}}]$};
            \node [fcPayout,right=of decision1] (payout1) {$P_1$};
            \node [fcPayout,right=of decision2] (payout2) {$P_2$};
            \node [fcPayout,below=of decision2] (payout3) {$P_1$};
            
            \draw [fcLine] (in) -- (decision1);
            \draw [fcLine] (decision1) -- (decision2) node [midway,right] {yes};
            \draw [fcLine] (decision1) -- (payout1) node [midway,above] {no};
            \draw [fcLine] (decision2) -- (payout2) node [midway,above] {no};
            \draw [fcLine] (decision2) -- (payout3) node [midway,right] {yes};
            
        \end{tikzpicture}
        \caption{}
        \label{fig:flow-chart-funds}
    \end{subfigure}
    \hfill
    \begin{subfigure}[t]{0.48\linewidth}
        \centering
        \begin{tikzpicture}
            \mathligsoff
            \ifthenelse{\equal{\jnPDFEXPORTVERSION}{springer}}{}{\footnotesize}
            
            \node [fcFunds] (in) at (0,0) {$\delta$};
            \node [fcDecision,below=of in] (decision1) {$P_2$ reveals $\hat{p}_i$ s.t.\ $H(\hat{p}_i) = H(p_i)$ \\ within $t\in[T_0, T_0{+}\Delta_{\mathrm{fwd}}]$};
            \node [fcPayout,right=of decision1] (payout1) {$P_1$};
            \node [fcPayout,below=of decision1] (payout2) {$P_2$};
            
            \draw [fcLine] (in) -- (decision1);
            \draw [fcLine] (decision1) -- (payout1) node [midway,above] {no};
            \draw [fcLine] (decision1) -- (payout2) node [midway,right] {yes};
            
        \end{tikzpicture}
        \caption{}
        \label{fig:flow-chart-txfee}
    \end{subfigure}
    \caption{Flow charts for payout of a Boomerang contract (\cf \figref{boomerang-contract}) concerning \subref{fig:flow-chart-funds} \$1 funds and \subref{fig:flow-chart-txfee} $\delta$ TX fees between $P_1$ and $P_2$}
    \label{fig:flow-charts}
\end{figure}

Note that \figref{boomerang-contract} has been simplified for ease of exposition.
Indeed, two essential aspects are not captured in \figref{boomerang-contract} and necessitate the refined final specification of the Boomerang contract in \figref{flow-charts}.
First,
timeouts $\Delta_{\mathrm{fwd}}$ and $\Delta_{\mathrm{rev}}$
(relative to current time $T_0$)
need to be chosen such that the source of the TF has time
to detect overdraw and reclaim the funds ($\Delta_{\mathrm{fwd}} < \Delta_{\mathrm{rev}}$).
Second,
having two separate forwardings would requisition twice the
liquidity
($\$2+\delta$)
while the forwardings are pending.
Instead,
the Boomerang contract as specified by the flow charts in \figref{flow-charts}
allows for consistent timeouts with $\Delta_{\mathrm{fwd}} < \Delta_{\mathrm{rev}}$
and
requisitions the TX amount in PC liquidity only once
($\$1+\delta$).
We present an implementation of \figref{flow-charts}
in Bitcoin Script
on top of Eltoo PCs \cite{decker_eltoo:_2018}
in \secref{implementation-bitcoin-script}.
Throughout the paper,
we use two circular arrows
as in \figref{boomerang-contract}
to visualize a Boomerang contract.

\subsection{End-To-End Procedure for Boomerang Transfers}
\label{sec:end-to-end-procedure-boomerang-transfer}

Given the intertwined preimage challenges of \secref{setup-preimage-challenges}
and the Boomerang contract of \secref{boomerang-contract},
we devise the end-to-end procedure for Boomerang TFs.

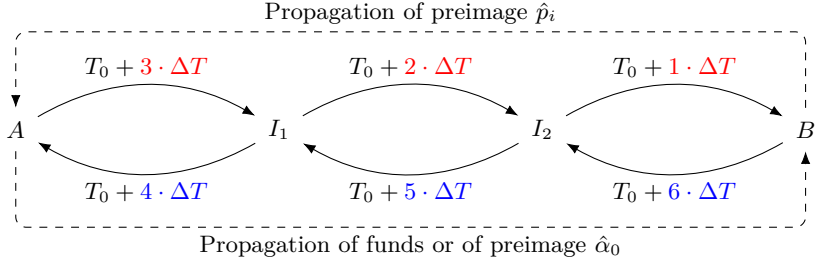
\begin{figure}[tb!]
    \centering
    \begin{tikzpicture}[
            every path/.append style={
                shorten <=0.2em,
                shorten >=0.2em,
            },
        ]
        \mathligsoff
        \ifthenelse{\equal{\jnPDFEXPORTVERSION}{springer}}{}{\footnotesize}
        
        \node [] (A) at (0,0) {$A$};
        \node [] (I1) at (3.5,0) {$I_1$};
        \node [] (I2) at (7,0) {$I_2$};
        \node [] (B) at (10.5,0) {$B$};
        
        \draw [-Latex] (A) to [bend left] node [midway,above] {$T_0 + \textcolor{red}{3 \cdot \Delta T}$} (I1);
        \draw [-Latex] (I1) to [bend left] node [midway,above] {$T_0 + \textcolor{red}{2 \cdot \Delta T}$} (I2);
        \draw [-Latex] (I2) to [bend left] node [midway,above] {$T_0 + \textcolor{red}{1 \cdot \Delta T}$} (B);

        \draw [-Latex] (B) to [bend left] node [midway,below] {$T_0 + \textcolor{blue}{6 \cdot \Delta T}$} (I2);
        \draw [-Latex] (I2) to [bend left] node [midway,below] {$T_0 + \textcolor{blue}{5 \cdot \Delta T}$} (I1);
        \draw [-Latex] (I1) to [bend left] node [midway,below] {$T_0 + \textcolor{blue}{4 \cdot \Delta T}$} (A);
        
        \draw [dashed,-Latex,rounded corners] (B) -- ++(0,1.3) -| (A) node [pos=0.25,above] {Propagation of preimage $\hat{p}_i$};
        \draw [dashed,-Latex,rounded corners] (A) -- ++(0,-1.3) -| (B) node [pos=0.25,below] {Propagation of funds or of preimage $\hat{\alpha}_0$};
        
    \end{tikzpicture}
    \caption[]{
        Staggering of timeouts of Boomerang contracts:
        The \textcolor{red}{$\Delta_{\mathrm{fwd}}$} are chosen such that propagation of the preimage $\hat{p}_i$ from $B$ via $I_2$ and $I_1$ to $A$ is guaranteed once $B$ reveals $\hat{p}_i$.
        The \textcolor{blue}{$\Delta_{\mathrm{rev}}$} are chosen such that propagation of the preimage $\hat{\alpha}_0$ from $A$ via $I_1$ and $I_2$ to $B$ is guaranteed once $A$ reveals $\hat{\alpha}_0$.
    }
    \label{fig:timeouts-forwarding}
\end{figure}

To forward a TX along a path from $A$ to $B$ via $I_1, ..., I_n$,
the timeouts $\Delta_{\mathrm{fwd}}$ and $\Delta_{\mathrm{rev}}$
of the Boomerang contracts between $A$ and $I_1$, $I_i$ and $I_{i+1}$, and $I_n$ and $B$
need to be chosen in the following way:
The forward components' timeouts decrease along the path from $A$ to $B$.
The reverse components' timeouts increase along the path from $A$ to $B$.
Additionally, the earliest reverse component expires later than the latest forward component,
to allow $A$ to react to overdraw and activate the reverse components.
An example is illustrated in \figref{timeouts-forwarding}.

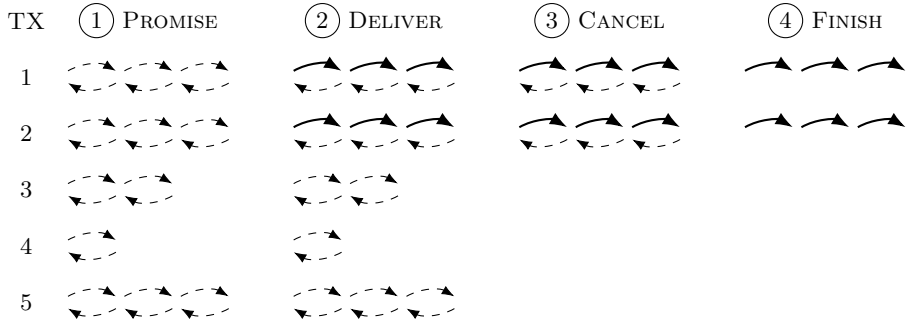
\begin{figure}%
    \centering
    \begin{tikzpicture}[
            every path/.append style={
                shorten <=0.2em,
                shorten >=0.2em,
            },
        ]
        \mathligsoff
        \ifthenelse{\equal{\jnPDFEXPORTVERSION}{springer}}{}{\footnotesize}
        
        \begin{scope}[xshift=-0.5cm,scale=0.75]
            \node at (0,1) {\textsc{TX}};
            \foreach \j in {1,...,5} {
                \node at (0,-\j+1) {$\j$};
            }
        \end{scope}
        
        \begin{scope}[xshift=0cm,scale=0.75]
            \node at (1.5cm,1) {\circled{1} \textsc{Promise}};
            \foreach \j/\d/\sF/\sB in {0/2/txPromised/txPromised,1/2/txPromised/txPromised,2/1/txPromised/txPromised,3/0/txPromised/txPromised,4/2/txPromised/txPromised} {
                \foreach \i in {0,...,\d} {
                    \draw [\sF,-Latex,yshift=+0.2em] (\i,-\j) to[bend left] (\i+1,-\j);
                    \draw [\sB,-Latex,yshift=-0.2em] (\i+1,-\j) to[bend left] (\i,-\j);
                }
            }
        \end{scope}
        
        \begin{scope}[xshift=3cm,scale=0.75]
            \node at (1.5cm,1) {\circled{2} \textsc{Deliver}};
            \foreach \j/\d/\sF/\sB in {0/2/txDelivered/txPromised,1/2/txDelivered/txPromised,2/1/txPromised/txPromised,3/0/txPromised/txPromised,4/2/txPromised/txPromised} {
                \foreach \i in {0,...,\d} {
                    \draw [\sF,-Latex,yshift=+0.2em] (\i,-\j) to[bend left] (\i+1,-\j);
                    \draw [\sB,-Latex,yshift=-0.2em] (\i+1,-\j) to[bend left] (\i,-\j);
                }
            }
        \end{scope}
        
        \begin{scope}[xshift=6cm,scale=0.75]
            \node at (1.5cm,1) {\circled{3} \textsc{Cancel}};
            \foreach \j/\d/\sF/\sB in {0/2/txDelivered/txPromised,1/2/txDelivered/txPromised} {
                \foreach \i in {0,...,\d} {
                    \draw [\sF,-Latex,yshift=+0.2em] (\i,-\j) to[bend left] (\i+1,-\j);
                    \draw [\sB,-Latex,yshift=-0.2em] (\i+1,-\j) to[bend left] (\i,-\j);
                }
            }
        \end{scope}
        
        \begin{scope}[xshift=9cm,scale=0.75]
            \node at (1.5cm,1) {\circled{4} \textsc{Finish}};
            \foreach \j/\d/\sF/\sB in {0/2/txDelivered/txNone,1/2/txDelivered/txNone} {
                \foreach \i in {0,...,\d} {
                    \draw [\sF,-Latex,yshift=+0.2em] (\i,-\j) to[bend left] (\i+1,-\j);
                    \draw [\sB,-Latex,yshift=-0.2em] (\i+1,-\j) to[bend left] (\i,-\j);
                }
            }
        \end{scope}
        
    \end{tikzpicture}
    \caption[]{
        Stages of a Boomerang TF
        (here using $w=v+u=2+3$ TXs,
        each through two $I_i$):
        \circled{1}
        $A$ attempts $5$ TXs;
        TXs $1$, $2$ and $5$ reach $B$,
        $3$ and $4$ do not.
        \circled{2}
        $B$ claims TXs $1$ and $2$ by revealing $P(1)$ and $P(2)$.
        \circled{3}
        The unsuccessful and outstanding TXs $3$, $4$ and $5$ are cancelled upon request from $B$.
        \circled{4}
        $A$ relinquishes the option to retract TXs $1$ and $2$ as no further funds can be drawn by $B$.
    }
    \label{fig:transfer-dynamic-behavior}
\end{figure}

A Boomerang TF of $v$ units using $w = v + u$ redundant TXs proceeds, after the setup of preimage challenges out-of-band, in four steps (\cf \figref{transfer-dynamic-behavior}):
\begin{enumerate}
    \item[\circled{1}] \textsc{Promise}:
        $A$ attempts $w = v+u$ TXs.
    \item[\circled{2}] \textsc{Deliver}:
        $B$ claims funds from up to $v$ TXs by revealing the preimages $p_i = P(i)$ to the corresponding preimage challenges $H(p_i)$, activating the forward components of the Boomerang contracts along the respective paths.
    \item[\circled{3}] \textsc{Cancel}:
        Upon a request by $B$ which is passed on to the tip of the path of an unsuccessful or surplus TX, outstanding TXs are cancelled.
        Note that cancellation is counterparty risk-free if it proceeds along the path from $B$ towards $A$,
        and honest as well as rational participants have a self-interest in freeing up liquidity that will foreseeably not earn TX fees.
    \item[\circled{4}] \textsc{Finish}:
        $A$ renounces the reverse components of the remaining Boomerang contracts to free up liquidity, as there is no more risk of $B$ overdrawing.
\end{enumerate}

\subsection{Security Guarantees}
\label{sec:security-guarantees}

We give the following guarantees to the source $A$ and the destination $B$ of a TF,
respectively,
which formalize security for the outlined Boomerang construction:
\begin{theorem}[$A$-Guarantee]
    \label{thm:a-guarantee}
    If more than $v$ of the redundant TXs are drawn from $A$,
    then $A$ can recover $\alpha_0$ and revert all TXs.
\end{theorem}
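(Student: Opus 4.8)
The plan is to reduce the claim to three facts: the bijectivity of $H$, polynomial interpolation of $P$ from its evaluations, and a timing argument for the reverse components of the Boomerang contracts. First I would make precise what ``drawn from $A$'' forces. By the forward component of the Boomerang contract (\figref{flow-charts}), $A$'s \$1 escrowed for the $i$-th TX moves toward $I_1$ only if $I_1$ reveals to $A$ a preimage $\hat{p}_i$ with $H(\hat{p}_i) = H(p_i)$, and $I_1$ can obtain such a $\hat{p}_i$ only through the chain of disclosures originating at $B$; since $H(x) = g^x$ is a bijection of $\IZ_q$ onto $\IG$ (of prime order $q$), necessarily $\hat{p}_i = p_i = P(i)$, the genuine evaluation. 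Hence, if more than $v$ TXs are drawn from $A$, then $A$ has received the true values $P(i)$ at more than $v$ pairwise distinct indices $i$, and $A$ knows which index belongs to which TX, having chosen the TX-to-index assignment and announced it out of band (\secref{setup-preimage-challenges}).

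Since $\operatorname{deg}(P) = v$, Lagrange interpolation through any $v+1$ of these points recovers $P$, hence all coefficients $\alpha_0, \dots, \alpha_v$ --- in particular the secret $\alpha_0$ --- and also lets $A$ evaluate $p_i = P(i)$ at every index. This establishes the first half of the claim.

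For the second half, fix any TX $i$ that was drawn from $A$, and any Boomerang contract on its path. Then $A$ holds $\hat{\alpha}_0 := \alpha_0$ with $H(\hat{\alpha}_0) = H(\alpha_0)$, and $\hat{p}_i := P(i)$ with $H(\hat{p}_i) = H(p_i)$ (it received $\hat{p}_i$ when the TX was drawn, and could recompute it in any case), which is precisely the predicate guarding the reverse component (\figref{flow-charts}). I would then invoke the staggering of timeouts (\figref{timeouts-forwarding}): the reverse timeouts $\Delta_{\mathrm{rev}}$ strictly increase from $A$ toward $B$, and the earliest reverse window --- on the $(A, I_1)$ channel --- closes strictly after the latest forward window, also on the $(A, I_1)$ channel. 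Hence $A$ retains a positive slack, at least $\Delta_{\mathrm{rev}} - \Delta_{\mathrm{fwd}}$ on that channel, between the last moment it can be made aware of the overdraw and the deadline for activating its own reverse component. Using that slack to interpolate $P$ and then disclose $\hat{\alpha}_0$, and with the per-hop timeout gaps chosen to exceed the worst-case P2P propagation and processing delay, the disclosure of $\hat{\alpha}_0$ cascades link by link toward $B$ within the successive reverse windows; on every channel the outcome is then case (c) of \secref{boomerang-contract}, so the \$1 principal returns to $A$ along the path, while on any undrawn TX nothing ever left $A$ (case (a)). Thus every TX is reverted.

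I expect the only genuine obstacle to be this last, timing, step: pinning down that the timeout gaps of \figref{timeouts-forwarding}, under a bound on network and processing delay, really do leave $A$ enough time to detect the overdraw, interpolate $P$, and sequentially propagate $\hat{\alpha}_0$ all the way to $B$ before the reverse windows close. The remaining pieces --- bijectivity of $H$, so that a ``drawn'' TX really does hand $A$ the true value $P(i)$, and the interpolation step, immediate from $\operatorname{deg}(P) = v$ --- are routine.
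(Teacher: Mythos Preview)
Your proposal is correct and follows essentially the same line as the paper: any valid preimage for the challenge $H(p_i)$ is, by bijectivity of $H$ together with the homomorphic construction of the challenges from $B$'s commitments, the evaluation $P(i)$ of a single degree-$v$ polynomial, so more than $v$ of them yield $\alpha_0$ by interpolation, after which the reverse components can be activated. Two small remarks. First, your aside that ``$I_1$ can obtain such a $\hat p_i$ only through the chain of disclosures originating at $B$'' is unnecessary and slightly misleading: the $A$-Guarantee is unconditional (no DLP assumption), and bijectivity alone forces $\hat p_i = P(i)$ regardless of provenance; the paper phrases this binding as ``$A$ computes the challenges from the commitments, so there exists a unique degree-$v$ polynomial through all $P(i)$,'' which makes clearer that even an adversarial $B$ cannot escape the polynomial structure. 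Second, your staggered-timeout analysis goes well beyond the paper's own proof, which simply asserts that $A$ ``can activate the reverse components'' and leaves the timing mechanics to \secref{boomerang-contract} and \figref{timeouts-forwarding}.
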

\begin{theorem}[$B$-Guarantee]
    \label{thm:b-guarantee}
    As long as $B$ follows the protocol
    and draws no more than $v$ of the redundant TXs,
    all TXs are final,
    except with negligible probability,
    provided the DLP is hard for $(\IG, g)$.
\end{theorem}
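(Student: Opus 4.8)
The plan is to reduce the claim to the hardness of the DLP for $(\IG,g)$ in two steps. \emph{Step~1 (finality reduces to recovering $\alpha_0$).} By construction of the Boomerang contract (\figref{flow-charts}), a forwarded TX along a path fails to be final only if, within its reverse timeout window, the source $A$---possibly in collusion with intermediaries it controls---presents a value $\hat\alpha_0$ with $H(\hat\alpha_0)=H(\alpha_0)$, activating the reverse components along that path. Since $B$ follows the protocol, $B$ reveals only valid preimages $P(i)$ for at most $v$ TXs and never $\alpha_0$ itself, so such an $\hat\alpha_0$ can surface only if $A$ computes one. Moreover, since $g$ generates the prime-order group $\IG$, the map $x\mapsto g^x$ is injective, so $H(\hat\alpha_0)=H(\alpha_0)$ forces $\hat\alpha_0=\alpha_0$. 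Hence ``all TXs are final'' holds unless the single event ``$A$ recovers the field element $\alpha_0$'' occurs, and it suffices to bound the probability of that event.

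\emph{Step~2 (recovering $\alpha_0$ is as hard as the DLP).} The only data in $A$'s view that can depend on $\alpha_0$ are the commitments $H(\alpha_0),\dots,H(\alpha_v)$, the preimage challenges $H(P(i))$---which by \eqref{preimage-challenges-intertwining} are deterministic functions of those commitments---and the revealed preimages $P(i)$ for the at most $v$ delivered TXs, at pairwise distinct nonzero indices (distinct and nonzero because indices lie in $\{1,\dots,w\}$ with $w<q$). Because $\deg P=v$, any such set of $\le v$ evaluations is jointly independent of $\alpha_0$, which stays uniform---this is the secrecy of Shamir's scheme, already noted in \secref{setup-preimage-challenges}. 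I would then argue that the sole remaining leakage about $\alpha_0$ is $H(\alpha_0)$ itself: given $y=H(\alpha_0)$ and freshly sampled uniform stand-ins for the revealed evaluations, a simulator reconstructs a view identically distributed to $A$'s by interpolating the commitment vector ``in the exponent'' through $y$ and those stand-ins using the homomorphic property \eqref{homomorphic-H}, exactly as in the secrecy proof of Feldman/Pedersen verifiable secret sharing \cite{feldman_practical_1987,pedersen_non-interactive_1992}. Composing $A$ with this simulator yields an algorithm that, on a uniformly random $y\in\IG$, outputs $\log_g y$ with exactly the probability that $A$ recovers $\alpha_0$; were that non-negligible, the DLP for $(\IG,g)$ would be broken. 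Adding the $1/q$ chance of blindly guessing $\alpha_0$ gives the claimed negligible bound.

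\emph{Main obstacle.} The delicate point is the simulator in Step~2. In the protocol, $A$ sees all of $H(\alpha_0),\dots,H(\alpha_v)$ \emph{before} it adaptively drives $B$---via control of the network and corrupt intermediaries---to deliver a particular $\le v$-subset of the TXs, whereas a Feldman-style simulator must fix the commitments already knowing which evaluations it will later have to open consistently with $y=H(\alpha_0)$. Making this rigorous calls either for restricting to the source-routed regime in which $A$ fixes the TX index set up front, or for a hybrid/guessing argument over the indices eventually opened; the underlying information-theoretic fact---that any $\le v$ shares perfectly hide $\alpha_0$---is the crux that lets the reduction to the DLP go through.
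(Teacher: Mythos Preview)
Your proposal is correct and follows essentially the same route as the paper: reduce finality to the infeasibility of recovering $\alpha_0$ from the commitments together with at most $v$ evaluations, then reduce that ``Alice's problem'' to the DLP by simulating Alice's view---sampling fresh evaluations at the revealed indices and reconstructing the commitment vector $H(\alpha_0),\dots,H(\alpha_v)$ ``in the exponent'' via the (inverse) Vandermonde relation and the homomorphic property~\eqref{homomorphic-H}. The adaptivity issue you flag as the main obstacle is in fact not addressed by the paper's proof either; there the index set $i_1,\dots,i_v$ is simply fixed up front and treated as known to the reduction, so your more cautious treatment (restricting to the source-routed regime or invoking a guessing argument over the opened indices) is, if anything, a refinement of the paper's argument rather than a departure from it.
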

\begin{corollary}[Proofs]
    \label{thm:proofs}
    If $A$ knows $\alpha_0$, this proves that $B$ cheated.
    If $A$ knows $p_i$, this proves that $B$ was paid accordingly.
    $A$ can forge the proofs only with negligible probability,
    provided the DLP is hard for $(\IG, g)$.
\end{corollary}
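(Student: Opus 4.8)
The plan is to read the corollary off the two guarantees just proved, using that $H\colon \IZ_q \to \IG$, $x \mapsto g^x$, is a bijection (both sets have size $q$ and $g$ generates $\IG$), so that ``$A$ knows $\alpha_0$'' (resp.\ ``$A$ knows $p_i$'') means precisely that $A$ holds the unique field element satisfying the committed challenge $H(\alpha_0)$ (resp.\ the derived challenge $H(p_i) = \prod_{j=0}^{v} H(\alpha_j)^{(i^j)}$ of \eqref{preimage-challenges-intertwining}).

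For the first claim I would argue the contrapositive via the $B$-Guarantee. Suppose $B$ follows the protocol and draws at most $v$ TXs. Then $A$'s view of $P$ consists of the public commitments $H(\alpha_0), \dots, H(\alpha_v)$, the derived challenges $H(P(i))$, and at most $v$ evaluations $P(i)$ that $B$ actually revealed. From the revealed evaluations alone, the degree-$v$ Shamir-style sharing keeps $\alpha_0$ uniform, so to output $\alpha_0$ the adversary $A$ must either guess it (probability $1/q$) or extract it from $H(\alpha_0)$, which by the same simulation argument used in the proof of the $B$-Guarantee amounts to solving the DLP for $(\IG, g)$; either way the success probability is negligible. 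Hence, whenever $A$ holds $\alpha_0$, then except with negligible probability $B$ has revealed more than $v$ evaluations of $P$, \ie\ overdrawn, which is exactly the event ``$B$ cheated''; the $A$-Guarantee then lets $A$ revert all TXs. The same bound caps the probability of forging this proof.

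For the second claim, note that $p_i = P(i)$ becomes visible on the network only when $B$ redeems the forward component of the Boomerang contract on path $i$, \ie\ exactly when $B$ collects the \$1 of TX $i$ -- which is ``being paid accordingly.'' The only other way $A$ could hold $p_i$ is to produce it without $B$'s disclosure: if $B$ has revealed at most $v$ evaluations, none of them $P(i)$, then $P(i)$ is uniform given those evaluations and $A$ succeeds with probability $1/q$; if instead $B$ has revealed $v{+}1$ or more evaluations, we are in the overdraw regime of the first claim, where the $A$-Guarantee reverts the whole TF so that no payment stands; and recovering $p_i$ from the challenge $H(p_i)$ directly is again a DLP instance. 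So, in the regime where the TF is final, $A$ holding $p_i$ proves $B$ was paid for TX $i$, and $A$ forges this only with negligible probability.

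I expect the bookkeeping of $A$'s view in each regime to be the only delicate point: the reduction to the DLP must hand its solver just the public commitments and the at most $v$ legitimately revealed preimages, invoking the polynomial-hiding property of the Shamir-style sharing (as in the cited PVSS constructions) so that these are consistent with every value of the challenged secret -- otherwise the reduction could leak enough evaluations to trivialize the DLP instance. Once this is pinned down, both unforgeability bounds are literally the reduction already carried out for the $B$-Guarantee, instantiated at $\alpha_0$ and at $p_i$, and nothing further is required.
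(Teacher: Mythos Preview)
Your proposal is essentially correct and, for the first claim, matches the paper: both read off directly from the $B$-Guarantee that if $B$ has not overdrawn then $A$ cannot produce $\alpha_0$ except with negligible probability, so possession of $\alpha_0$ witnesses overdraw.

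For the second claim the paper takes a shorter route than you do. You propose to re-instantiate the DLP reduction from the $B$-Guarantee with $p_i$ in the role of the hidden target. The paper instead reduces forging $p_i$ to forging $\alpha_0$: if $A$ could compute $P(i_{n+1})$ from the commitments together with $n \le v$ other evaluations, then in particular for $n=v$ she would hold $v{+}1$ evaluations of the degree-$v$ polynomial, interpolate, and recover $\alpha_0$ --- contradicting Theorem~\ref{thm:b-guarantee} outright; the cases $n<v$ are only harder. This avoids reopening the reduction and dispenses with your separate case analysis on whether $B$ has overdrawn. Your approach works too, but note that the intermediate sentence ``$P(i)$ is uniform given those evaluations and $A$ succeeds with probability $1/q$'' is not a standalone argument, since $A$ also holds the commitments $H(\alpha_j)$ which pin down $P$ uniquely; you correctly patch this in your final paragraph by deferring to the reduction, but the paper's interpolation trick makes that patch unnecessary.
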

Note that `drawing' a TX means `revealing the preimage $p_i = P(i)$ for the challenge $H(p_i)$ of TX $i$'.
Hence, the above guarantees
are statements about `how much preimage information flows' `into $A$' and `out of $B$',
implicitely assuming the worst-case that all intermediary $I_i$ collude with the respective opposing party.

\begin{proof}[Proof of Theorem~\ref{thm:a-guarantee}: $A$-Guarantee]
Due to the homomorphic property of $H$,
and the fact that $A$ computes the challenges $H(p_i)$
from the commitments $H(\alpha_0), ..., H(\alpha_v)$,
there exists a unique polynomial of degree $v$
that passes through all $P(i)$.
This polynomial can be determined by interpolation,
if more than $v$ preimages $p_i = P(i)$ are revealed.
Hence,
in the case of overdrawing,
$A$ can be certain to obtain an $\alpha_0$,
such that $A$ can activate the reverse components of the Boomerang contracts
and revert all TXs.
\ifthenelse{\equal{\jnPDFEXPORTVERSION}{springer}}{\qed}{}
\end{proof}

\begin{proof}[Proof of Theorem~\ref{thm:b-guarantee}: $B$-Guarantee]
It suffices to show that
one cannot recover $\alpha_0$
from $H(\alpha_0)$, ..., $H(\alpha_v)$, $P(i_1)$, ..., $P(i_v)$,
with $i_1, ..., i_v$ distinct,
except with negligible probability,
provided that the DLP is hard for $(\IG, g)$.

W.l.o.g., assume that all preimages revealed by $B$
have been forwarded to $A$.
Call the task $A$ faces \emph{Alice's problem} (AP),
\ie,
given an AP instance $(g^{\alpha_0}, ..., g^{\alpha_v}, P(i_1), ..., P(i_v))$
for known distinct $i_1, ..., i_v$,
find $\alpha_0$.
We show that AP is computationally infeasible
by reducing the DLP to AP.

The reduction proceeds as follows.
Assume $\CB$ is a blackbox that solves AP instances efficiently.
Using $\CB$, we construct a DLP solver $\CA$.
Present $\CA$ with a DLP instance $(g, g^a)$.
$\CA$ samples $P(i_j) \drawrandom \IZ_q$ for $j = 1,...,v$ uniformly at random,
such that there exists a unique degree $v$ polynomial interpolating $P(0), P(i_1), ..., P(i_v)$,
with coefficients $\alpha_0, ..., \alpha_v$ chosen uniformly at random.
Recall that $P(0) = \alpha_0 = a$ is unknown.
By \eqref{polynomial},
\begin{IEEEeqnarray}{C}
    \underbrace{
    \begin{bmatrix}
        P(0)   \\
        P(i_1)   \\
        \vdots   \\
        P(i_v)
    \end{bmatrix}
    }_{
    :=\,
    \Bp
    }
    =
    \underbrace{
    \begin{bmatrix}
        0^0         & 0^1         & \dots       & 0^v     \\
        i_1^0       & i_1^1       & \dots       & i_1^v   \\
        \vdots      & \vdots      &             & \vdots  \\
        i_v^0       & i_v^1       & \dots       & i_v^v  
    \end{bmatrix}
    }_{
    :=\,
    \BM
    }
    \underbrace{
    \begin{bmatrix}
        \alpha_0   \\
        \alpha_1   \\
        \vdots   \\
        \alpha_v
    \end{bmatrix}
    }_{
    :=\,
    \Balpha
    },
\end{IEEEeqnarray}
where the entries of $\BM$ are fixed. %

As $\BM$ is a Vandermonde matrix and thus invertible,
\begin{IEEEeqnarray}{C}
    \Bp = \BM \Balpha
    \quad
    \iff
    \quad
    \Balpha = \BM^{-1} \Bp.
\end{IEEEeqnarray}
Let $\tilde{m}_{ij} := \left\{ \BM^{-1} \right\}_{ij}$
and $p_j := \left\{ \Bp \right\}_j$,
where $\{X\}_y$ is the $y$-th entry of $X$.
Then,
\begin{IEEEeqnarray}{C}
    \alpha_i = \sum_{j=0}^v \tilde{m}_{ij} p_j.
\end{IEEEeqnarray}
Hence, using the homomorphic property of $H$ from \eqref{homomorphic-H}, obtain
\begin{IEEEeqnarray}{C}
    \forall i\in\{0, ..., v\}\colon
    \quad
    H(\alpha_i) = H\left( \sum_{j=0}^v \tilde{m}_{ij} p_j \right) = \prod_{j=0}^v H(p_j)^{\tilde{m}_{ij}}.
\end{IEEEeqnarray}

Now, $\CA$ has a tuple
$(H(\alpha_0), ..., H(\alpha_v), P(i_1), ..., P(i_v))$
drawn from the distribution of AP instances implied by the Boomerang protocol.
$\CA$ invokes the AP oracle $\CB$
and obtains $\alpha_0 = a$,
which solves the DLP instance.

Thus, an efficient solution to AP implies an efficient solution to DLP.
But since DLP is assumed to be hard,
AP has to be hard, proving the claim.
\ifthenelse{\equal{\jnPDFEXPORTVERSION}{springer}}{\qed}{}
\end{proof}

\begin{proof}[Proof of Corollary~\ref{thm:proofs}: Proofs]
Theorems \ref{thm:a-guarantee} and \ref{thm:b-guarantee} imply
that
$A$ knows $\alpha_0$ iff $B$ cheated
(except with negligible probability, provided the DLP is hard for $(\IG, g)$).

For $p_i$ to serve as proof of payment,
it requires that
for any $n <= v$
and
known distinct $i_1, ..., i_{n+1}$,
$A$
cannot obtain $P(i_{n+1})$ from
$H(\alpha_0)$, ..., $H(\alpha_v)$, $P(i_1)$, ..., $P(i_n)$
(except with negligible probability, provided the DLP is hard for $(\IG, g)$).
For $n=v$,
this follows from Theorem~\ref{thm:b-guarantee}
by contradiction,
as $\alpha_0$ can be computed from $P(i_1), ..., P(i_{v+1})$.
For $n<v$,
the problem is harder than for $n=v$.
\ifthenelse{\equal{\jnPDFEXPORTVERSION}{springer}}{\qed}{}
\end{proof}

\section{Implementation in Bitcoin Script}
\label{sec:implementation-bitcoin-script}

We present an implementation of Boomerang in Bitcoin Script
and its deployment on an Eltoo PC.
For simplicity,
we first present an implementation which hinges on a new opcode for $H(x)$.
Subsequently,
we refine the implementation to
apply adaptor signatures based on Schnorr or ECDSA signatures instead.

\subsection{Implementation Using an Opcode for $H(x)$}
\label{sec:implementation-opcode-H}

Assume the addition of a new command \verb|ECEXP|$\langle{}g\rangle{}$ to Bitcoin Script
for computing $H(x) = g^x \in \IG$.
The command would pop $x$ off the stack and push $g^x$ back onto the stack.
The necessary cryptographic primitives to do this in ECs
are already part of Bitcoin and only need to be exposed to the scripting engine.
Then,
\verb|ECEXP|$\langle{}g\rangle{}$
can be used as a one-way function in similar situations as the
\verb|SHA*|
or
\verb|HASH*|
commands,
but provides the homomorphic property
that can be necessary or useful for applications beyond Boomerang.

We show how the mechanics of the Boomerang contract
specified
in \figref{flow-charts}
can be implemented using Bitcoin Script.
To this end,
we refer to the settlement transaction of the PC on top of which the Boomerang contract is deployed as $\mathsf{TX}_{\mathrm{settle}}$,
and to transactions that $P_i$ would use to pay a certain output to themselves as
$\mathsf{TX}_{\mathrm{payout},P_i}$.
For the signature scheme employed by Bitcoin,
an identity $P$ has a public key $\cryptoPK{P}$
and a private (secret) key $\cryptoSK{P}$.
Using $\cryptoSK{P}$, a signature $\sigma := \cryptoSign{P}{x}$ for string $x$ can be created.

\begin{figure}%
    \centering
    \begin{minipage}[t]{\textwidth}
        Bitcoin Script implementation of output on $\mathsf{TX}_{\mathrm{settle}}$:
        \vspace{-0.5em}\ifthenelse{\equal{\jnPDFEXPORTVERSION}{springer}}{\scriptsize}{}
        \begin{Verbatim}[numbers=left,commandchars=\\\{\},codes={\catcode`$=3\catcode`^=7\catcode`_=8},frame=single]
IF
    ECEXP$\langle{}g\rangle{}$  PUSH$\langle{}H(p_i)\rangle{}$  EQUALVERIFY   \label{vrb:codetxfundshashchallenge}
    2  PUSH$\langle{}\mathsf{pk}_{P_{1,\mathrm{tmp}}}\rangle{}$  PUSH$\langle{}\mathsf{pk}_{P_{2,\mathrm{tmp}}}\rangle{}$  2  CHECKMULTISIGVERIFY   \label{vrb:codetxfundsredemption}
ELSE
    PUSH$\langle{}T_0{}+{}\Delta_{\mathrm{fwd}}\rangle{}$  CHECKLOCKTIMEVERIFY  DROP   \label{vrb:codetxfundstimelock}
    PUSH$\langle{}\mathsf{pk}_{P_1}\rangle{}$  CHECKSIGVERIFY   \label{vrb:codetxfundstimeout}
ENDIF
        \end{Verbatim}
    \end{minipage}
    \smallskip\\
    \begin{minipage}[t]{0.45\textwidth}
        Redemption for $P_1$ (`no' branch):
        \vspace{-0.5em}\ifthenelse{\equal{\jnPDFEXPORTVERSION}{springer}}{\scriptsize}{}
        \begin{Verbatim}[numbers=left,commandchars=\\\{\},codes={\catcode`$=3\catcode`^=7\catcode`_=8},frame=single]
$\mathsf{sig}_{P_1}(\mathsf{TX}_{\mathrm{payout},P_1})$
FALSE
        \end{Verbatim}
    \end{minipage}
    \hfill
    \begin{minipage}[t]{0.45\textwidth}
        Redemption for $P_2$ (`yes' branch):
        \vspace{-0.5em}\ifthenelse{\equal{\jnPDFEXPORTVERSION}{springer}}{\scriptsize}{}
        \begin{Verbatim}[numbers=left,commandchars=\\\{\},codes={\catcode`$=3\catcode`^=7\catcode`_=8},frame=single]
$\mathsf{sig}_{P_{1,\mathrm{tmp}}}(\mathsf{TX}_{\mathrm{retaliate}})$
$\mathsf{sig}_{P_{2,\mathrm{tmp}}}(\mathsf{TX}_{\mathrm{retaliate}})$
$\hat{p}_i$
TRUE
        \end{Verbatim}
    \end{minipage}
    \ifthenelse{\equal{\jnPDFEXPORTVERSION}{springer}}{\medskip}{\bigskip}\\
    \begin{minipage}[t]{\textwidth}
        Bitcoin Script implementation of output on $\mathsf{TX}_{\mathrm{retaliate}}$:
        \vspace{-0.5em}\ifthenelse{\equal{\jnPDFEXPORTVERSION}{springer}}{\scriptsize}{}
        \begin{Verbatim}[numbers=left,commandchars=\\\{\},codes={\catcode`$=3\catcode`^=7\catcode`_=8},frame=single]
IF
    ECEXP$\langle{}g\rangle{}$  PUSH$\langle{}H(\alpha_0)\rangle{}$  EQUALVERIFY   \label{vrb:codetxfundshashchallengeTWO}
    PUSH$\langle{}\mathsf{pk}_{P_1}\rangle{}$  CHECKSIGVERIFY   \label{vrb:codetxfundsredemptionTWO}
ELSE
    PUSH$\langle{}T_0{}+{}\Delta_{\mathrm{rev}}\rangle{}$  CHECKLOCKTIMEVERIFY  DROP   \label{vrb:codetxfundstimelockTWO}
    PUSH$\langle{}\mathsf{pk}_{P_2}\rangle{}$  CHECKSIGVERIFY   \label{vrb:codetxfundstimeoutTWO}
ENDIF
        \end{Verbatim}
    \end{minipage}
    \smallskip\\
    \begin{minipage}[t]{0.45\textwidth}
        Redemption for $P_1$ (`yes' branch):
        \vspace{-0.5em}\ifthenelse{\equal{\jnPDFEXPORTVERSION}{springer}}{\scriptsize}{}
        \begin{Verbatim}[numbers=left,commandchars=\\\{\},codes={\catcode`$=3\catcode`^=7\catcode`_=8},frame=single]
$\mathsf{sig}_{P_1}(\mathsf{TX}_{\mathrm{payout},P_1})$
$\hat{\alpha}_0$
TRUE
        \end{Verbatim}
    \end{minipage}
    \hfill
    \begin{minipage}[t]{0.45\textwidth}
        Redemption for $P_2$ (`no' branch):
        \vspace{-0.5em}\ifthenelse{\equal{\jnPDFEXPORTVERSION}{springer}}{\scriptsize}{}
        \begin{Verbatim}[numbers=left,commandchars=\\\{\},codes={\catcode`$=3\catcode`^=7\catcode`_=8},frame=single]
$\mathsf{sig}_{P_2}(\mathsf{TX}_{\mathrm{payout},P_2})$
FALSE
        \end{Verbatim}
    \end{minipage}
    \caption{
    Implementation of the \$1 outputs
    (\cf \figref{flow-chart-funds}), and witness stacks for redemption, %
    for the two staggered $\mathsf{TX}_{\mathrm{settle}}$ and $\mathsf{TX}_{\mathrm{retaliate}}$ (\cf \figref{integration-with-eltoo}).
    For $\mathsf{TX}_{\mathrm{settle}}$:
    l.~\ref{vrb:codetxfundshashchallenge} enforces revelation of $\hat{p}_i$ s.t. $H(\hat{p}_i) = H(p_i)$,
    l.~\ref{vrb:codetxfundsredemption} requires signatures of the temporary identities $P_{i,\mathrm{tmp}}$
    (created separately for each instance of the contract, and used to sign $\mathsf{TX}_{\mathrm{retaliate}}$ as part of the commitment to a forwarding),
    l.~\ref{vrb:codetxfundstimelock} enforces the timelock of $T_0 + \Delta_{\mathrm{fwd}}$,
    l.~\ref{vrb:codetxfundstimeout} requires a signature of $P_1$.
    For $\mathsf{TX}_{\mathrm{retaliate}}$:
    l.~\ref{vrb:codetxfundshashchallengeTWO} enforces revelation of $\hat{\alpha}_0$ s.t. $H(\hat{\alpha}_0) = H(\alpha_0)$,
    l.~\ref{vrb:codetxfundsredemptionTWO} requires a signature of $P_1$,
    l.~\ref{vrb:codetxfundstimelockTWO} enforces the timelock of $T_0 + \Delta_{\mathrm{rev}}$,
    l.~\ref{vrb:codetxfundstimeoutTWO} requires a signature of $P_2$.
    }
    \label{fig:code-funds-INLINE}
\end{figure}

The implementation of \figref{flow-chart-funds} is provided
in \figref{code-funds-INLINE}.
The first condition of the flow chart in \figref{flow-chart-funds},
which captures the forward component of the Boomerang contract,
is implemented as an output of $\mathsf{TX}_{\mathrm{settle}}$.
If this condition is met,
the distribution of the funds is decided by an additional transaction
$\mathsf{TX}_{\mathrm{retaliate}}$
which implements the reverse component of the Boomerang contract.
$\mathsf{TX}_{\mathrm{retaliate}}$ is agreed-upon and signed by both parties
using temporary one-time identities 
$P_{i,\mathrm{tmp}}$ %
as part of the deployment of a Boomerang contract.
The implementation of \figref{flow-chart-txfee} is provided
in \figref{code-txfee} of \secref{appendix-implementation-boomerang-contract-bitcoin-script-via-ecexp}.

\subsection{Implementation Using Adaptor Signatures}
\label{sec:implementation-adaptor-signatures-schnorr-signatures}

The dependency on a new opcode \verb|ECEXP|$\langle{}g\rangle{}$
can be lifted
by replacing the hash-lock
with adaptor signatures
\cite{poelstra2018scriptless}
based on Schnorr signatures
\cite{maxwell_simple_2018}
or ECDSA signatures
\cite{ecdsa_adaptor_sigs}.
A primer on adaptor signatures is given in \secref{appendix-background-adaptor-signatures}.
Rather than one party revealing a preimage $x$ for a challenge $h := H(x)$
when publishing a $\mathsf{TX}$ to claim funds or advance a contract,
$P_1$ and $P_2$ exchange an adaptor signature $\sigma'$ for $\mathsf{TX}$
which either party can turn into a proper signature $\sigma$ once they obtain $x$.
Once $\mathsf{TX}$ gets published together with $\sigma$,
the other party can derive $x$ from $\sigma$ and $\sigma'$.

\subsection{Deployment of Boomerang on an Eltoo Payment Channel}
\label{sec:deployment-boomerang-eltoo}

\begin{figure}%
    \centering
    \begin{tikzpicture}[
        ]
        \mathligsoff
        
        \ifthenelse{\equal{\jnPDFEXPORTVERSION}{springer}}{}{\footnotesize}

        \draw [fill=black!10,draw=none] (-2,1.7) -- ++(6,0) -- ++(0,-3.5) -- ++(-1.7,0) -- ++(0,-1.2) -- ++(2,0) -- ++(0,-1.5) -- (-2,-4.5) coordinate (eltooareasouthwest) -- cycle;
        \draw [fill=red!15,draw=none] (-2,-4.6) coordinate (boomerangnorthwest) -- ++(6.4,0) -- ++(0,1.7) -- ++(-2,0) -- ++(0,1) -- ++(1.7,0) -- ++(0,3.6) -- ++(5.8,0) -- ++(0,-8.3) -- (-2,-6.6) -- cycle;
        \node [anchor=south west] at (eltooareasouthwest) {\emph{\large Eltoo}};
        \node [anchor=north west] at (boomerangnorthwest) {\emph{\large Boomerang}};

        \node (dotdotdot) at (-1.5,0) {...};

        \node (TXupdate) at (0,0) [draw,rectangle,minimum height=5em,minimum width=5em] {};
        \node at (TXupdate.north west) [anchor=south west] {$\mathsf{TX}_{\mathrm{update}}^{\mathrm{el2}}$};
        \draw [-Latex] (dotdotdot) -- (TXupdate) node [pos=1,below,anchor=north east,align=center] {\emojisig{}$P_{1,\mathrm{u}}^{\mathrm{el2}}$\\\emojisig{}$P_{2,\mathrm{u}}^{\mathrm{el2}}$};

        \node (TXsettle) at (2.3,-3) [draw,rectangle,minimum height=7em,minimum width=5em] {};
        \node at (TXsettle.north west) [anchor=south west] {$\mathsf{TX}_{\mathrm{settle}}^{\mathrm{el2}}$};

        \draw [-Latex,rounded corners] (TXupdate) -- ++(2,0) node [anchor=west] {\emojisigcheck{}$P_{1,\mathrm{u}}^{\mathrm{el2}},P_{2,\mathrm{u}}^{\mathrm{el2}}$};
        \draw [-Latex,rounded corners] (TXupdate) -| ++(1.10,-2) node [anchor=west,pos=0.63] {\emojireltl{} \emojisigcheck{}$P_{1,\mathrm{s}}^{\mathrm{el2}},P_{2,\mathrm{s}}^{\mathrm{el2}}$} |- (TXsettle.west) node [pos=1,below,anchor=north east,align=center] {\emojisig{}$P_{1,\mathrm{s}}^{\mathrm{el2}}$\\\emojisig{}$P_{2,\mathrm{s}}^{\mathrm{el2}}$};

        \draw [-Latex] (TXsettle.south east) ++(0,0.8) coordinate (TXsettleOUT3) -- ++(0.3,0) node [anchor=west] {\emojisigcheck $P_1$};
        \draw [-Latex] (TXsettle.south east) ++(0,0.3) coordinate (TXsettleOUT4) -- ++(0.3,0) node [anchor=west] {\emojisigcheck $P_2$};

        \draw (TXsettle.north east) ++(0,-0.3) coordinate (TXsettleOUT1) node [anchor=east] {$\delta$};
        \draw [-Latex] (TXsettleOUT1) -- ++(3,0) node [anchor=west] {\emojihash{}$H(p_i)$ \emojisigcheck{}$P_2$};
        \draw [-Latex,rounded corners] (TXsettleOUT1) -| ++(1.5,0.5) -- ++(1.5,0) node [anchor=west] {\emojiabstl{}\,$T_0{+}\Delta_{\mathrm{fwd}}$ \emojisigcheck{}$P_1$};

        \node (TXretaliate) at (5.8,-5) [draw,rectangle,minimum height=5em,minimum width=5em] {};
        \node at (TXretaliate.north west) [anchor=south west] {$\mathsf{TX}_{\mathrm{retaliate}}^{\mathrm{boom}}$};

        \draw [-Latex,rounded corners] (TXsettle.north east) ++(0,-0.8) coordinate (TXsettleOUT2) node [anchor=east] {\$1} -| ++(1.5,-1) node [anchor=west,pos=0.75] {\emojihash{}$H(p_i)$ \emojisigcheck{}$P_{1,\mathrm{tmp}}^{\mathrm{boom}},P_{2,\mathrm{tmp}}^{\mathrm{boom}}$} |- (TXretaliate) node [pos=1,below,anchor=north east,align=right] {\emojipreimage{}$P(i)$\\\emojisig{}$P_{1,\mathrm{tmp}}^{\mathrm{boom}}$\\\emojisig{}$P_{2,\mathrm{tmp}}^{\mathrm{boom}}$};
        \draw [-Latex] (TXsettleOUT2) -- ++(3,0) node [anchor=west] {\emojiabstl{}\,$T_0{+}\Delta_{\mathrm{fwd}}$ \emojisigcheck{}$P_1$};

        \draw [-Latex,rounded corners] (TXretaliate.east) -| ++(0.25,0.25) -- ++(0.35,0) node [anchor=west] {\emojihash{}$H(\alpha_0)$ \emojisigcheck{}$P_1$};
        \draw [-Latex,rounded corners] (TXretaliate.east) -| ++(0.25,-0.25) -- ++(0.35,0) node [anchor=west] {\emojiabstl{}\,$T_0{+}\Delta_{\mathrm{rev}}$ \emojisigcheck{}$P_2$};
        
    \end{tikzpicture}
    \caption[]{
        A Boomerang contract is deployed as outputs and a thereon depending $\mathsf{TX}$
        on top of the settlement mechanism of an Eltoo PC.
        (Legend:
        Boxes are Bitcoin transactions,
        an arrow leaving/entering a box indicates an output/input, respectively,
        labels along arrows indicate spending conditions,
        forks of arrows indicate alternative spending conditions,
        labels near $\mathsf{TX}$ inputs indicate redemption witnesses,
        \emojisigcheck{} `requires signature of',
        \emojisig{} `signed by',
        \emojihash{} `preimage challenge',
        \emojipreimage{} `preimage solution',
        \emojiabstl{} `absolute time-lock',
        \emojireltl{} `relative time-lock'.)
    }
    \label{fig:integration-with-eltoo}
\end{figure}
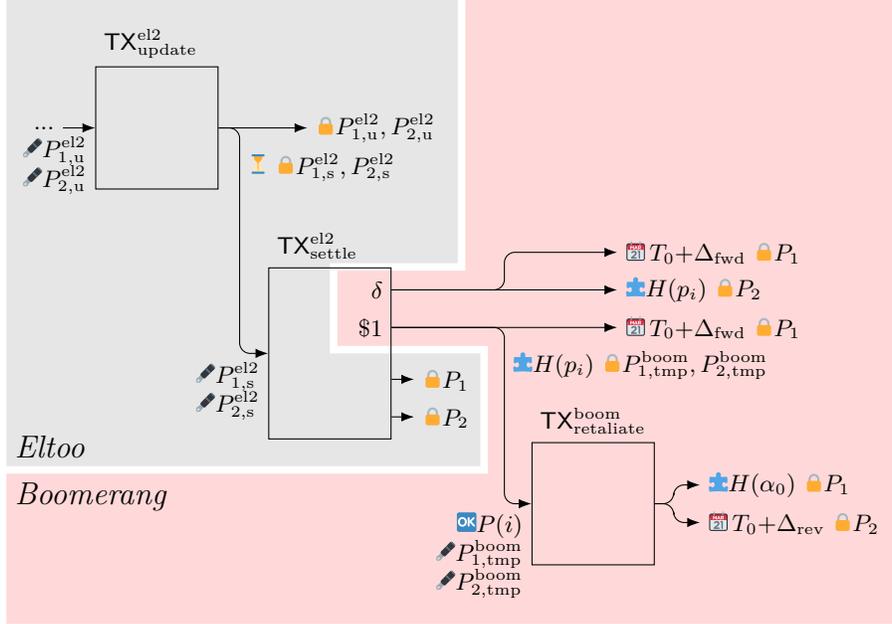

\Figref{integration-with-eltoo}
shows a Boomerang contract for the $i$-th TX with preimage challenge $H(p_i)$ deployed on top of an Eltoo PC.
Transactions and cryptographic identities belonging to Eltoo are marked with $(.)^\mathrm{el2}$,
those belonging to Boomerang are marked with $(.)^\mathrm{boom}$.
\Figref{integration-with-eltoo} also illustrates \figref{code-txfee,code-funds-INLINE}.

\section{Experimental Evaluation}
\label{sec:experimental-evaluation}

We evaluate Boomerang in a low-fidelity prototype implementation%
\footnote{
The source code is available on: \url{https://github.com/tse-group/boomerang}
}
of the routing components
inspired by the testbed of Flash \cite{wang_flash:_2019}.
First, we outline our experimental setup,
then we specify the three contending schemes,
and finally we present and discuss the observed performance.

\subsection{Experimental Setup}
\label{sec:scenario}

The PN topology is drawn %
from the Watts–Strogatz ensemble
where initially $N=100$ nodes %
in a regular ring lattice
are connected to their $8$ nearest neighbors,
and subsequently each initial edge
is rewired
randomly
with probability $0.8$.
For each PC endpoint,
its initial liquidity is drawn
log-uniformly %
in $[\log(100), \log(1000)]$.
The topology is static and known to all nodes;
each PC's balance %
is only
known to the PC's endpoints.
$50000$ TFs are generated as follows.
Source and destination are sampled uniformly %
from the $N$ nodes.
The amounts are drawn from the
Ripple dataset used in SpeedyMurmurs \cite{roos_settling_2017}.
As in previous works,
each node has a backlog of TFs it attempts to route one by one.
We report
sample mean and sample standard deviation
of
$10$ PNs and TF traces.

Our implementation is a low-fidelity prototype of the routing and communications tasks.
An abstract PN protocol %
accommodates
different routing schemes
(\cf Flash \cite{wang_flash:_2019}).
A TX takes place in two phases:
First, in the \textsc{Reserve} phase,
Boomerang contracts are set up along a path from $A$ to $B$ if PC liquidity permits.
$A$ is notified whether \textsc{Reserve} was successful.
Second, in the \textsc{Rollback}/\textsc{Execute} phase,
the chain of Boomerang contracts is either dismantled (\textsc{Rollback})
or the funds are delivered (\textsc{Execute}).
An \textsc{Abort} message can %
stop an ongoing \textsc{Reserve} attempt.
Purely informational messages
(\eg, outcome of \textsc{Reserve})
are relayed without delay.
Operations on the PC
(\eg, deploying a Boomerang contract)
are simulated by a uniform %
delay from \SIrange{50}{150}{\milli\second}.
Note that PC balances are only discovered implicitely
through failed/successful TX attempts,
similar to the current state of affairs in Lightning.

The software is written in Golang 1.12.7.
Every node is an independent process on a machine with
4x AMD Opteron 6378 processors (total 64 cores).
The scenario is chosen such that the CPUs are never fully utilized
to avoid distortion from computational limitations.
The nodes communicate via `localhost'.

\subsection{Three Simple Routing Protocols}
\label{sec:three-routing-protocols}

We compare three multi-path routing protocols
(pseudo code given in %
\secref{appendix-pseudo-code-evaluated-routing-schemes}).
`Retry' (\cf \algoref{retry}) initially attempts $v$ TXs and reattempts up to $u$ of them.
`Redundancy' (\cf \algoref{redundancy}) attempts $v+u$ TXs from the start.
`Redundant-Retry(10)' (\cf \algoref{redundant-retry-10}) is a combination of the two, which starts out with $v+\min(u,10)$ TXs and reattempts up to $u-\min(u,10)$ of them.
The aim is to trade off the lower TTC of `Redundancy'
with the adaptivity of `Retry'.
TXs are routed on paths chosen randomly from a set of precomputed edge-disjoint shortest paths.
All three schemes
use atomic multi-path (AMP, \cite{osuntokun_[lightning-dev]_2018}),
\ie, they only \textsc{Execute} once enough successful TX attempts have been made to
satisfy the full TF (\cf \algoref{amp-utilities}).
If the TF cannot be fully satisfied, all TXs are \textsc{Rollback}ed.

The baseline protocol `Retry' is rather simplistic
compared to recent developments
\cite{%
poon_bitcoin_2016-1,%
prihodko_flare:_2016,%
malavolta_silentwhispers:_2016,%
roos_settling_2017,%
hoenisch_aodvbased_2018,%
sivaraman_routing_2018,%
di_stasi_routing_2018,%
wang_flash:_2019%
}.
We choose it because it resembles
what currently can and is being done on Lightning.
We conjecture that redundancy as a generic technique
can boost AMP routing algorithms across the board.

\subsection{Simulation Results}
\label{sec:simulation-results}

\begin{figure}[tb!]
    \begin{center}
        \ref{leg:sim-results-small-ttc-retry-02-amp-2} Retry
        \quad
        \ref{leg:sim-results-small-ttc-redundancy-02-amp-2} Redundancy
        \quad
        \ref{leg:sim-results-small-ttc-redundantretry-02-amp-2-10} Redundant-Retry(10)
    \end{center}
    \vspace{-0.5em}
    \begin{subfigure}[t]{0.32\linewidth}
        \centering
        \begin{tikzpicture}
            \begin{axis}[
                    myresultsplot01,
                    xlabel={$u$},
                    symbolic x coords={0,10,20,75,150},
                    enlarge x limits=0.12,
                    bar width=1.1mm,
                    width=1.175\linewidth,   %
                    height=1.1\linewidth,
                ]
    
                \def\PLOTDATAFILE{figures/02_nodes100_txs500_paths25_edges4.605170to6.907755_paths25_SELECT.data}
    
                \foreach \thisColor/\thisLabel/\thisAlgorithm in {
                    myParula01Blue/Retry/retry-02-amp-2,
                    myParula02Orange/Redundancy/redundancy-02-amp-2,
                    myParula03Yellow/Redundant-Retry(10)/redundantretry-02-amp-2-10} {
                        \edef\temp{
                            \noexpand\addplot+ [
                                error bars/.cd,
                                y explicit,
                                y dir=both,
                            ] table [
                                x=u,
                                y expr=\noexpand\thisrow{throughput_success-mean},
                                y error expr=\noexpand\thisrow{throughput_success-std},
                                discard if not={algo}{\thisAlgorithm},
                            ] {\PLOTDATAFILE};
                            \noexpand\addlegendentry{\thisLabel};
                        }
                        \temp
                }
                
                \legend{};
    
            \end{axis}
        \end{tikzpicture}
        \caption{
        Throughput [\si{\xrp\per\second\per\nde}]
        }
        \label{fig:sim-results-small-throughput}
    \end{subfigure}
    \begin{subfigure}[t]{0.32\linewidth}
        \centering
        \begin{tikzpicture}
            \begin{axis}[
                    myresultsplot01,
                    xlabel={$u$},
                    symbolic x coords={0,10,20,75,150},
                    enlarge x limits=0.12,
                    bar width=1.1mm,
                    width=1.175\linewidth,   %
                    height=1.1\linewidth,
                ]
    
                \def\PLOTDATAFILE{figures/02_nodes100_txs500_paths25_edges4.605170to6.907755_paths25_SELECT.data}
    
                \foreach \thisColor/\thisLabel/\thisAlgorithm in {
                    myParula01Blue/Retry/retry-02-amp-2,
                    myParula02Orange/Redundancy/redundancy-02-amp-2,
                    myParula03Yellow/Redundant-Retry(10)/redundantretry-02-amp-2-10} {
                        \edef\temp{
                            \noexpand\addplot+ [
                                error bars/.cd,
                                y explicit,
                                y dir=both,
                            ] table [
                                x=u,
                                y expr=\noexpand\thisrow{ttc_for_successful_tx-mean},
                                y error expr=\noexpand\thisrow{ttc_for_successful_tx-std},
                                discard if not={algo}{\thisAlgorithm},
                            ] {\PLOTDATAFILE};
                            \noexpand\addlegendentry{\thisLabel};
                            \noexpand\label{leg:sim-results-small-ttc-\thisAlgorithm};
                        }
                        \temp
                }
                
                \legend{};
    
            \end{axis}
        \end{tikzpicture}
        \caption{
        TTC [\si{\second}]
        }
        \label{fig:sim-results-small-ttc}
    \end{subfigure}
    \begin{subfigure}[t]{0.32\linewidth}
        \centering
        \begin{tikzpicture}
            \begin{axis}[
                    myresultsplot01,
                    xlabel={$u$},
                    symbolic x coords={0,10,20,75,150},
                    enlarge x limits=0.12,
                    bar width=1.1mm,
                    width=1.175\linewidth,   %
                    height=1.1\linewidth,
                ]
    
                \def\PLOTDATAFILE{figures/02_nodes100_txs500_paths25_edges4.605170to6.907755_paths25_SELECT.data}
    
                \foreach \thisColor/\thisLabel/\thisAlgorithm in {
                    myParula01Blue/Retry/retry-02-amp-2,
                    myParula02Orange/Redundancy/redundancy-02-amp-2,
                    myParula03Yellow/Redundant-Retry(10)/redundantretry-02-amp-2-10} {
                        \edef\temp{
                            \noexpand\addplot+ [
                                error bars/.cd,
                                y explicit,
                                y dir=both,
                            ] table [
                                x=u,
                                y expr=\noexpand\thisrow{volume_for_successful_tx-mean},
                                y error expr=\noexpand\thisrow{volume_for_successful_tx-std},
                                discard if not={algo}{\thisAlgorithm},
                            ] {\PLOTDATAFILE};
                            \noexpand\addlegendentry{\thisLabel};
                        }
                        \temp
                }
                
                \legend{};
    
            \end{axis}
        \end{tikzpicture}
        \caption{
        Volume
        [\si{\xrp}]
        }
        \label{fig:sim-results-small-size}
    \end{subfigure}
    \caption{
    \subref{fig:sim-results-small-throughput}
    Average success throughput per node,
    \subref{fig:sim-results-small-ttc}
    Average time-to-completion for successful TFs,
    \subref{fig:sim-results-small-size}
    Average volume for successful TFs.
    }
    \label{fig:sim-results-small}
\end{figure}

Our results are shown in
\figref{sim-results-small}.
Supplemental plots can be found in
\ifthenelse{\equal{\jnPDFEXPORTVERSION}{springer}}{\cite[Section E, Fig.~12]{boomerang_arxiv_full}.}{\secref{appendix-plots-experimental-evaluation}.}
Previous works have gauged the performance of PN routing algorithms
in terms of success volume or success count, \ie,
for a given trace of TFs,
what total amount or number of TFs
gets satisfied. %
However, these metrics are problematic. %
If TFs are spread out in time
then more liquidity is available in the PN.
This renders it easier to satisfy a TF
and inflates these metrics.

Instead,
we consider success throughput, \ie,
total amount of successfully transferred funds
\emph{per runtime}.
The results for $v=25$ are shown
in \figref{sim-results-small-throughput}
\ifthenelse{\equal{\jnPDFEXPORTVERSION}{springer}}{(\cf \cite[Fig.~12(a)]{boomerang_arxiv_full}).}{(\cf \figref{sim-results-throughput}).}
`Redundancy' and `Redundant-Retry(10)'
show a 2x increase in throughput
compared to `Retry'.

Another relevant figure of merit is
the average time-to-completion (TTC) of a successful TF, \ie,
how long is the delay between when the execution of a TF starts
and when AMP is finalized.
This determines the latency experienced by the user
and for how long liquidity is tied up in pending AMP TFs.
The results for $v=25$
in \figref{sim-results-small-ttc}
\ifthenelse{\equal{\jnPDFEXPORTVERSION}{springer}}{(\cf \cite[Fig.~12(b)]{boomerang_arxiv_full})}{(\cf \figref{sim-results-ttc})}
show a \SI{40}{\percent} reduction in TTC
for `Redundancy' over `Retry'.
The larger $u$, the longer `Retry' takes but the quicker `Redundancy' completes.
This plot also demonstrates how `Redundant-Retry(10)'
trades off between `Retry' and `Redundancy'.
For $u<=10$,
`Redundant-Retry(10)' is identical with `Redundancy'
and hence follows the performance improvement of `Redundancy'.
For $u>10$,
the retry aspect weighs in more and more and
`Redundant-Retry(10)' follows a similar trajectory as `Retry'.

Finally,
\figref{sim-results-small-size}
\ifthenelse{\equal{\jnPDFEXPORTVERSION}{springer}}{(\cf \cite[Fig.~12(c)]{boomerang_arxiv_full})}{(\cf \figref{sim-results-size})}
shows the average size of a successful TF,
where `Redundant-Retry(10)' outperforms `Redundancy' by \SI{30}{\percent}
which in turn outperforms `Retry' by \SI{40}{\percent}.
Note that the throughput of `Redundant-Retry(10)' and `Redundancy' are comparable.
Thus,
weighting between
adaptively retrying
and upfront redundancy
trades off
TTC
and
average successful TF volume,
at a constant throughput.

\section*{Acknowledgments}

{
\ifthenelse{\equal{\jnPDFEXPORTVERSION}{springer}}{\footnotesize}{}
We thank Giulia Fanti and Lei Yang for fruitful discussions.
VB and DT
are supported by the Center for Science of Information,
an NSF Science and Technology Center,
under grant agreement CCF-0939370.
JN
is supported by
the Reed-Hodgson Stanford Graduate Fellowship.
Icons from `Twemoji v12.0' (\url{https://github.com/twitter/twemoji}) by Twitter, Inc and other contributors, licensed under CC BY 4.0.
\par
}

\ifthenelse{\equal{\jnPDFEXPORTVERSION}{springer}}{\vspace{-0.5em}}{}

\bibliographystyle{splncs04}
\bibliography{references_zotero,references_manual,references_ln_issues}

\begin{thebibliography}{10}
\providecommand{\url}[1]{\texttt{#1}}
\providecommand{\urlprefix}{URL }
\providecommand{\doi}[1]{https://doi.org/#1}

\bibitem{aktas_straggler_2019}
Aktas, M.F., Soljanin, E.: Straggler {Mitigation} at {Scale}  (Jun 2019),
  \url{http://arxiv.org/abs/1906.10664}

\bibitem{cohen_benaloh_secret_1987}
Benaloh, J.C.: Secret {Sharing} {Homomorphisms}: {Keeping} {Shares} of a
  {Secret} {Secret}. In: Proc. {CRYPTO} '86. pp. 251--260. Santa Barbara, CA,
  USA (1987)

\bibitem{DBLP:conf/sigcomm/ByersLMR98}
Byers, J.W., Luby, M., Mitzenmacher, M., Rege, A.: {A Digital Fountain Approach
  to Reliable Distribution of Bulk Data}. In: Proc. {ACM} {SIGCOMM}. pp.
  56--67. Vancouver, B.C., Canada (1998). \doi{10.1145/285237.285258}

\bibitem{dean_tail_2013}
Dean, J., Barroso, L.A.: The {Tail} at {Scale}. Commun. ACM  \textbf{56}(2),
  74--80 (Feb 2013). \doi{10.1145/2408776.2408794}

\bibitem{decker_eltoo:_2018}
Decker, C., Russell, R., Osuntokun, O.: eltoo: {A} {Simple} {Layer}2 {Protocol}
  for {Bitcoin}. Tech. rep. (Apr 2018),
  \url{https://blockstream.com/2018/04/30/en-eltoo-next-lightning/}

\bibitem{decker_fast_2015}
Decker, C., Wattenhofer, R.: A {Fast} and {Scalable} {Payment} {Network} with
  {Bitcoin} {Duplex} {Micropayment} {Channels}. In: Stabilization, {Safety},
  and {Security} of {Distributed} {Systems}. pp. 3--18. LNCS, Springer (2015)

\bibitem{di_stasi_routing_2018}
Di~Stasi, G., Avallone, S., Canonico, R., Ventre, G.: Routing {Payments} on the
  {Lightning} {Network}. In: Proc. {IEEE}
  {iThings}/{GreenCom}/{CPSCom}/{SmartData}. pp. 1161--1170 (Jul 2018).
  \doi{10.1109/Cybermatics\_2018.2018.00209}

\bibitem{dziembowski_perun:_2017}
Dziembowski, S., Eckey, L., Faust, S., Malinowski, D.: Perun: {Virtual}
  {Payment} {Hubs} over {Cryptocurrencies}  (2017),
  \url{https://eprint.iacr.org/2017/635}

\bibitem{dziembowski_general_2018-1}
Dziembowski, S., Faust, S., Host{\'a}kov{\'a}, K.: General {State} {Channel}
  {Networks}. In: Proc. {ACM} {SIGSAC}. pp. 949--966. Toronto, Canada (2018).
  \doi{10.1145/3243734.3243856}

\bibitem{paper_Elias1956}
Elias, P.: {Coding for Two Noisy Channels}. In: Information Theory, pp. 61--74.
  Academic Press (1956)

\bibitem{feldman_practical_1987}
Feldman, P.: A practical scheme for non-interactive verifiable secret sharing.
  In: 28th {Annual} {Symposium} on {Foundations} of {Computer} {Science} (sfcs
  1987). pp. 427--438 (Oct 1987). \doi{10.1109/SFCS.1987.4}

\bibitem{gudgeon_sok:_2019}
Gudgeon, L., Moreno-Sanchez, P., Roos, S., McCorry, P., Gervais, A.: {SoK}:
  {Off} {The} {Chain} {Transactions}  (2019),
  \url{https://eprint.iacr.org/2019/360}

\bibitem{hoenisch_aodvbased_2018}
Hoenisch, P., Weber, I.: {AODV}{\textendash}{Based} {Routing} for {Payment}
  {Channel} {Networks}. In: Blockchain {\textendash} {ICBC} 2018. pp. 107--124.
  LNCS, Springer (2018)

\bibitem{jourenko_sok:_2019}
Jourenko, M., Kurazumi, K., Larangeira, M., Tanaka, K.: {SoK}: {A} {Taxonomy}
  for {Layer}-2 {Scalability} {Related} {Protocols} for {Cryptocurrencies}
  (2019), \url{https://eprint.iacr.org/2019/352}

\bibitem{khalil_revive:_2017}
Khalil, R., Gervais, A.: Revive: {Rebalancing} {Off}-{Blockchain} {Payment}
  {Networks}  (2017), \url{https://eprint.iacr.org/2017/823}

\bibitem{lee_speeding_2018}
Lee, K., Lam, M., Pedarsani, R., Papailiopoulos, D., Ramchandran, K.: Speeding
  {Up} {Distributed} {Machine} {Learning} {Using} {Codes}. IEEE Transactions on
  Information Theory  \textbf{64}(3),  1514--1529 (Mar 2018).
  \doi{10.1109/TIT.2017.2736066}

\bibitem{DBLP:journals/rfc/rfc6330}
Luby, M., Shokrollahi, A., Watson, M., Stockhammer, T., Minder, L.: {RaptorQ}
  {Forward Error Correction Scheme for Object Delivery}. {RFC}  \textbf{6330}
  (2011). \doi{10.17487/RFC6330}

\bibitem{malavolta_silentwhispers:_2016}
Malavolta, G., Moreno-Sanchez, P., Kate, A., Maffei, M.: {SilentWhispers}:
  {Enforcing} {Security} and {Privacy} in {Decentralized} {Credit} {Networks}
  (2016), \url{https://eprint.iacr.org/2016/1054}

\bibitem{maxwell_simple_2018}
Maxwell, G., Poelstra, A., Seurin, Y., Wuille, P.: Simple {Schnorr}
  {Multi}-{Signatures} with {Applications} to {Bitcoin}  (2018),
  \url{https://eprint.iacr.org/2018/068}

\bibitem{miller_sprites_2017}
Miller, A., Bentov, I., Kumaresan, R., Cordi, C., McCorry, P.: Sprites and
  {State} {Channels}: {Payment} {Networks} that {Go} {Faster} than {Lightning}
  (Feb 2017), \url{http://arxiv.org/abs/1702.05812}

\bibitem{ecdsa_adaptor_sigs}
Moreno-Sanchez, P., Kate, A.: {Scriptless} {Scripts} with {ECDSA} (Apr 2018),
  \url{https://lists.linuxfoundation.org/pipermail/lightning-dev/2018-April/001221.html}

\bibitem{nakamoto_bitcoin:_2008}
Nakamoto, S.: Bitcoin: {A} {Peer}-to-{Peer} {Electronic} {Cash} {System}. Tech.
  rep. (Oct 2008), \url{https://bitcoin.org/bitcoin.pdf}

\bibitem{osuntokun_[lightning-dev]_2018}
Osuntokun, O.: {AMP}: {Atomic} {Multi}-{Path} {Payments} over {Lightning} (Feb
  2018),
  \url{https://lists.linuxfoundation.org/pipermail/lightning-dev/2018-February/000993.html}

\bibitem{pedersen_non-interactive_1992}
Pedersen, T.P.: Non-{Interactive} and {Information}-{Theoretic} {Secure}
  {Verifiable} {Secret} {Sharing}. In: Proc. {CRYPTO} '91. pp. 129--140. Santa
  Barbara, CA, USA (1992)

\bibitem{piatkivskyi_split_2018}
Piatkivskyi, D., Nowostawski, M.: Split {Payments} in {Payment} {Networks}. In:
  Data {Privacy} {Management}, {Cryptocurrencies} and {Blockchain}
  {Technology}. pp. 67--75. LNCS, Springer (2018)

\bibitem{poelstra2018scriptless}
Poelstra, A.: {Scriptless Scripts} (2018),
  \url{https://download.wpsoftware.net/bitcoin/wizardry/mw-slides/2018-05-18-l2/slides.pdf}

\bibitem{poon_bitcoin_2016-1}
Poon, J., Dryja, T.: The {Bitcoin} {Lightning} {Network}: {Scalable}
  {Off}-{Chain} {Instant} {Payments}. Tech. rep. (Jan 2016),
  \url{https://lightning.network/docs/}

\bibitem{prihodko_flare:_2016}
Prihodko, P., Zhigulin, S., Sahno, M., Ostrovskiy, A., Osuntokun, O.: Flare:
  {An} {Approach} to {Routing} in {Lightning} {Network} (Jul 2016)

\bibitem{roos_settling_2017}
Roos, S., Moreno-Sanchez, P., Kate, A., Goldberg, I.: Settling {Payments}
  {Fast} and {Private}: {Efficient} {Decentralized} {Routing} for
  {Path}-{Based} {Transactions}  (Sep 2017),
  \url{http://arxiv.org/abs/1709.05748}

\bibitem{schoenmakers_simple_1999}
Schoenmakers, B.: A {Simple} {Publicly} {Verifiable} {Secret} {Sharing}
  {Scheme} and {Its} {Application} to {Electronic} {Voting}. In: Proc. {CRYPTO}
  '99. pp. 148--164. Santa Barbara, CA, USA (1999)

\bibitem{shamir_how_1979}
Shamir, A.: How to {Share} a {Secret}. Commun. ACM  \textbf{22}(11),  612--613
  (Nov 1979). \doi{10.1145/359168.359176}

\bibitem{sivaraman_routing_2018}
Sivaraman, V., Venkatakrishnan, S.B., Alizadeh, M., Fanti, G., Viswanath, P.:
  Routing {Cryptocurrency} with the {Spider} {Network}  (Sep 2018),
  \url{http://arxiv.org/abs/1809.05088}

\bibitem{wang_flash:_2019}
Wang, P., Xu, H., Jin, X., Wang, T.: Flash: {Efficient} {Dynamic} {Routing} for
  {Offchain} {Networks}  (Feb 2019), \url{http://arxiv.org/abs/1902.05260}

\end{thebibliography}
\appendix

\section{Cryptographic Preliminaries}
\label{sec:appendix-cryptographic-preliminaries}

Let $\IG$ be a cyclic multiplicative group of prime order $q \geq 2^{2\lambda}$ with a generator $g \in \IG$, where $\lambda$ is a security parameter.
Let $H\colon \IZ_q \to \IG$ with $H(x) := g^x$, where $\IZ_q$ is the finite field of size $q$ (\ie, integers modulo $q$).
We require that $H$ be difficult to invert,
which is formalized in the following two definitions:
\begin{definition}[Negligible Function]
    A function $\varepsilon\colon \IN \to \IR^{+}$ is %
    \emph{negligible}
    if
    \begin{IEEEeqnarray}{C}
        \forall c > 0\colon
        \exists k_0\colon
        \forall k > k_0\colon
        \qquad
        \varepsilon(k) < \frac{1}{k^c}.
    \end{IEEEeqnarray}
\end{definition}
In other words,
negligible is what decays faster than every polynomial.
\begin{definition}[Discrete Logarithm (DL) Assumption]
    Given a generator $g$ of a group $\IG$,
    and an $x \drawrandom \IZ_q$
    chosen uniformly at random in $\IZ_q$,
    for every probabilistic polynomial time (with respect to $\lambda$) algorithm $\mathcal A_{\mathrm{DL}}$,
    \begin{IEEEeqnarray}{C}
        \operatorname{Pr}\left[ \mathcal A_{\mathrm{DL}}(g, g^x) = x \right] = \varepsilon(\lambda).
    \end{IEEEeqnarray}
\end{definition}
The discrete logarithm problem (DLP) is said to be hard for generator $g$ in group $\IG$,
if the DL assumption holds for $g$ and $\IG$,
\ie,
no computationally bounded adversary can compute $\log_g(g^x)$
with non-negligible probability.
It is commonly assumed that the DLP is hard in certain elliptic curves (ECs),
which are hence widely used in cryptographic applications, \eg, in Bitcoin.
The DL assumption makes $H$ a one-way function.
\section{Implementation of Boomerang Contract in Bitcoin Script via Elliptic Curve Scalar Multiplication} %
\label{sec:appendix-implementation-boomerang-contract-bitcoin-script-via-ecexp}

See \figref{code-funds-INLINE,code-txfee} for Bitcoin Script implementations of \figref{flow-charts}.

\begin{figure}
    \centering
    \begin{minipage}[t]{\textwidth}
        Bitcoin Script implementation of output on $\mathsf{TX}_{\mathrm{settle}}$:
        \vspace{-0.5em}\ifthenelse{\equal{\jnPDFEXPORTVERSION}{springer}}{\scriptsize}{}
        \begin{Verbatim}[numbers=left,commandchars=\\\{\},codes={\catcode`$=3\catcode`^=7\catcode`_=8},frame=single]
IF
    ECEXP$\langle{}g\rangle{}$  PUSH$\langle{}H(p_i)\rangle{}$  EQUALVERIFY   \label{vrb:codetxfeehashchallenge}
    PUSH$\langle{}\mathsf{pk}_{P_2}\rangle{}$  CHECKSIGVERIFY   \label{vrb:codetxfeeredemptionPtwo}
ELSE
    PUSH$\langle{}T_0{}+{}\Delta_{\mathrm{fwd}}\rangle{}$  CHECKLOCKTIMEVERIFY  DROP   \label{vrb:codetxfeetimelock}
    PUSH$\langle{}\mathsf{pk}_{P_1}\rangle{}$  CHECKSIGVERIFY   \label{vrb:codetxfeeredemptionPone}
ENDIF
        \end{Verbatim}
    \end{minipage}
    \smallskip\\
    \begin{minipage}[t]{0.45\textwidth}
        Redemption for $P_1$ (`no' branch):
        \vspace{-0.5em}\ifthenelse{\equal{\jnPDFEXPORTVERSION}{springer}}{\scriptsize}{}
        \begin{Verbatim}[numbers=left,commandchars=\\\{\},codes={\catcode`$=3\catcode`^=7\catcode`_=8},frame=single]
$\mathsf{sig}_{P_1}(\mathsf{TX}_{\mathrm{payout},P_1})$
FALSE
        \end{Verbatim}
    \end{minipage}
    \hfill%
    \begin{minipage}[t]{0.45\textwidth}
        Redemption for $P_2$ (`yes' branch):
        \vspace{-0.5em}\ifthenelse{\equal{\jnPDFEXPORTVERSION}{springer}}{\scriptsize}{}
        \begin{Verbatim}[numbers=left,commandchars=\\\{\},codes={\catcode`$=3\catcode`^=7\catcode`_=8},frame=single]
$\mathsf{sig}_{P_2}(\mathsf{TX}_{\mathrm{payout},P_2})$
$\hat{p}_i$
TRUE
        \end{Verbatim}
    \end{minipage}
    \caption{Bitcoin Script implementation of the output concerning $\delta$ TX fee (\cf \figref{flow-chart-txfee}), and witness stacks for redemption in favor of $P_1$ and $P_2$, respectively:
    l.~\ref{vrb:codetxfeehashchallenge} enforces revelation of $\hat{p}_i$ such that $H(\hat{p}_i) = H(p_i)$,
    l.~\ref{vrb:codetxfeeredemptionPtwo} requires a signature of $P_2$,
    l.~\ref{vrb:codetxfeetimelock} enforces the timelock until $T_0 + \Delta_{\mathrm{fwd}}$,
    l.~\ref{vrb:codetxfeeredemptionPone} requires a signature of $P_1$.
    }
    \label{fig:code-txfee}
\end{figure}

\section{Background on Adaptor Signatures}
\label{sec:appendix-background-adaptor-signatures}

We briefly summarize Schnorr signatures
\cite{maxwell_simple_2018}.
Let $\tilde{H}$ be a cryptographic hash function
(modeled as a random oracle),
and $x \| y$ denote the concatenation of $x$ and $y$.
We continue to assume that $\IG$ is a multiplicative
group with group operation `$\cdot$'.
For Schnorr signatures,
every identity is composed of a secret key $x$ and a public key $P := g^x$.
To sign a message $m$,
draw $r \drawrandom \IZ_q$,
then compute $R := g^r$ and
$s = r + \tilde{H}(P \| R \| m) x$.
The signature is $\sigma := (s, R)$.
To verify a signature $\sigma := (s, R)$
for $m$ by $P$,
check
\begin{IEEEeqnarray}{C}
    g^s \overset{?}{=} R \cdot P^{\tilde{H}(P \| R \| m)}.
\end{IEEEeqnarray}

An adaptor signature $\sigma'$ has the property that given $\sigma'$,
knowledge of a proper signature $\sigma$ is equivalent to knowledge of a precommitted value $t$
\cite{poelstra2018scriptless}.
Consider parties $P_1$ and $P_2$
with secret keys $x_i$ and public keys $P_i := g^{x_i}$.
Both know a commitment $T := g^t$ to a (potentially unknown) value $t$.
To create an adaptor signature $\sigma'$
for $m$,
both draw $r_i \drawrandom \IZ_q$, compute $R_i := g^{r_i}$,
and exchange $(P_i, R_i)$.
Then, they compute and exchange
\begin{IEEEeqnarray}{C}
    s_i' = r_i + \tilde{H}(P_1 \cdot P_2 \| R_1 \cdot R_2 \cdot T \| m) x_i.
\end{IEEEeqnarray}
The adaptor signature is $\sigma' = (R_1 \cdot R_2 \cdot T, s_1' + s_2')$.
If either $P_i$ gets to know $t$,
they can produce a valid total signature $\sigma = (R_1 \cdot R_2 \cdot T, s_1' + s_2' + t)$.
Vice versa,
if either $P_i$ learns a valid total signature $\sigma = (R_1 \cdot R_2 \cdot T, s)$,
they can compute $t = s - s_1' - s_2'$.
For instance,
suppose
$m$ is a transaction that benefits $P_2$ and requires a signature from $P_1 \cdot P_2$
with nonce $R_1 \cdot R_2 \cdot T$.
Furthermore,
suppose $P_2$ obtains $t$.
Then it can use the adaptor signature $\sigma'$
to produce a valid total signature $\sigma$ and claim its funds.
In this case, $P_1$ can recover $t$ from $\sigma$ and $\sigma'$.

\section{Pseudo Code of Evaluated Routing Schemes}
\label{sec:appendix-pseudo-code-evaluated-routing-schemes}

See \algoref{amp-utilities,retry,redundancy,redundant-retry-10}.

\begin{algorithm}
    \caption{Finalize AMP TF}
    \label{algo:amp-utilities}
    \ifthenelse{\equal{\jnPDFEXPORTVERSION}{springer}}{\scriptsize}{}
    \begin{algorithmic}[1]
        \Procedure{FinalizeAMP}{$\CO, \CP, v$}
                        \Comment{\textsc{Abort}s outstanding TX attempts $\CO$, \textsc{Execute}s/\textsc{Rollback}s successful TXs $\CP$ depending on number of required TXs $v$}
            \State $\operatorname{Send\textsc{Abort}}(\CO)$
            \If{$\cardinality{\CP} = v$}
                \State $\operatorname{Send\textsc{Execute}}(\CP)$
            \Else
                \State $\operatorname{Send\textsc{Rollback}}(\CP)$
            \EndIf
            \State $\operatorname{Receive\textsc{Reserve}ResponsesAndSend\textsc{Rollback}}(\CO)$
        \EndProcedure
    \end{algorithmic}
\end{algorithm}

\begin{algorithm}
    \caption{`Retry' routing scheme}
    \label{algo:retry}
    \ifthenelse{\equal{\jnPDFEXPORTVERSION}{springer}}{\scriptsize}{}
    \begin{algorithmic}[1]
        \Procedure{Retry}{$\CT, v, u$}
                        \Comment{Split each TF $t\in\CT$ into $v$ TXs and retry $u$ TXs}
            \For{$t\in\CT$}
                            \Comment{Source $t.s$, destination $t.d$, amount $t.v$}
                \State $\CO, \CP, \CN \gets \emptyset, \emptyset, \emptyset$
                            \Comment{TX attempts: outstanding $\CO$, successful $\CP$, failed $\CN$}
                \State $u' \gets u$
                            \Comment{Unused TX retries}
                \For{$i=1,...,v$}
                                \Comment{Attempt $v$ TXs}
                    \State $\CO \gets \CO \cup \operatorname{Send\textsc{Reserve}}(\operatorname{RandomPath}(t.s, t.d), t.v/v)$
                \EndFor
                \While{
                    $\cardinality{\CP} < v$
                    $\land$
                    $\cardinality{\CO} > 0$
                    $\land$
                    $\cardinality{\CP} + \cardinality{\CO} + u' >= v$
                    }
                                \Comment{TF is contingent}
                        \State $\operatorname{ReceiveAndClassify\textsc{Reserve}Responses}(\CO, \CP, \CN)$
                        \For{new elements $r\in\CN$}
                            \State $\operatorname{Send\textsc{Rollback}}(r)$
                            \If{$u' > 0$}
                                \State $\CO \gets \CO \cup \operatorname{Send\textsc{Reserve}}(\operatorname{RandomPath}(t.s, t.d), t.v/v)$
                                \State $u' \gets u' - 1$
                            \EndIf
                        \EndFor
                \EndWhile
                \State $\operatorname{\textsc{FinalizeAMP}}(\CO, \CP, v)$
            \EndFor
        \EndProcedure
    \end{algorithmic}
\end{algorithm}

\begin{algorithm}
    \caption{`Redundancy' routing scheme}
    \label{algo:redundancy}
    \ifthenelse{\equal{\jnPDFEXPORTVERSION}{springer}}{\scriptsize}{}
    \begin{algorithmic}[1]
        \Procedure{Redundancy}{$\CT, v, u$}
                        \Comment{Split TFs $t\in\CT$ into $v$ plus $u$ redundant TXs}
            \For{$t\in\CT$}
                            \Comment{Source $t.s$, destination $t.d$, amount $t.v$}
                \State $\CO, \CP, \CN \gets \emptyset, \emptyset, \emptyset$
                            \Comment{TX attempts: outstanding $\CO$, successful $\CP$, failed $\CN$}
                \For{$i=1,...,v+u$}
                                \Comment{Attempt $v$ TXs}
                    \State $\CO \gets \CO \cup \operatorname{Send\textsc{Reserve}}(\operatorname{RandomPath}(t.s, t.d), t.v/v)$
                \EndFor
                \While{
                    $\cardinality{\CP} < v$
                    $\land$
                    $\cardinality{\CO} > 0$
                    $\land$
                    $\cardinality{\CP} + \cardinality{\CO} >= v$
                    }
                                \Comment{TF is contingent}
                        \State $\operatorname{ReceiveAndClassify\textsc{Reserve}Responses}(\CO, \CP, \CN)$
                        \For{new elements $r\in\CN$}
                            \State $\operatorname{Send\textsc{Rollback}}(r)$
                        \EndFor
                \EndWhile
                \State $\operatorname{\textsc{FinalizeAMP}}(\CO, \CP, v)$
            \EndFor
        \EndProcedure
    \end{algorithmic}
\end{algorithm}

\begin{algorithm}
    \caption{`Redundant-Retry(10)' routing scheme}
    \label{algo:redundant-retry-10}
    \ifthenelse{\equal{\jnPDFEXPORTVERSION}{springer}}{\scriptsize}{}
    \begin{algorithmic}[1]
        \Procedure{RedundantRetry}{$\CT, v, u, 10$}
            \For{$t\in\CT$}
                            \Comment{Source $t.s$, destination $t.d$, amount $t.v$}
                \State $\CO, \CP, \CN \gets \emptyset, \emptyset, \emptyset$
                            \Comment{TX attempts: outstanding $\CO$, successful $\CP$, failed $\CN$}
                \State $u' \gets u - \operatorname{min}(u, 10)$
                            \Comment{Unused TX retries}
                \For{$i=1,...,v+\operatorname{min}(u, 10)$}
                                \Comment{Attempt $v+\operatorname{min}(u, 10)$ TXs}
                    \State $\CO \gets \CO \cup \operatorname{Send\textsc{Reserve}}(\operatorname{RandomPath}(t.s, t.d), t.v/v)$
                \EndFor
                \While{
                    $\cardinality{\CP} < v$
                    $\land$
                    $\cardinality{\CO} > 0$
                    $\land$
                    $\cardinality{\CP} + \cardinality{\CO} + u' >= v$
                    }
                                \Comment{TF is contingent}
                        \State $\operatorname{ReceiveAndClassify\textsc{Reserve}Responses}(\CO, \CP, \CN)$
                        \For{new elements $r\in\CN$}
                            \State $\operatorname{Send\textsc{Rollback}}(r)$
                            \If{$u' > 0$}
                                \State $\CO \gets \CO \cup \operatorname{Send\textsc{Reserve}}(\operatorname{RandomPath}(t.s, t.d), t.v/v)$
                                \State $u' \gets u' - 1$
                            \EndIf
                        \EndFor
                \EndWhile
                \State $\operatorname{\textsc{FinalizeAMP}}(\CO, \CP, v)$
            \EndFor
        \EndProcedure
    \end{algorithmic}
\end{algorithm}

\ifthenelse{\equal{\jnPDFEXPORTVERSION}{springer}}{}{

\section{Supplemental Plots for Experimental Evaluation}
\label{sec:appendix-plots-experimental-evaluation}

For a more detailed version of \figref{sim-results-small},
see \figref{sim-results}.

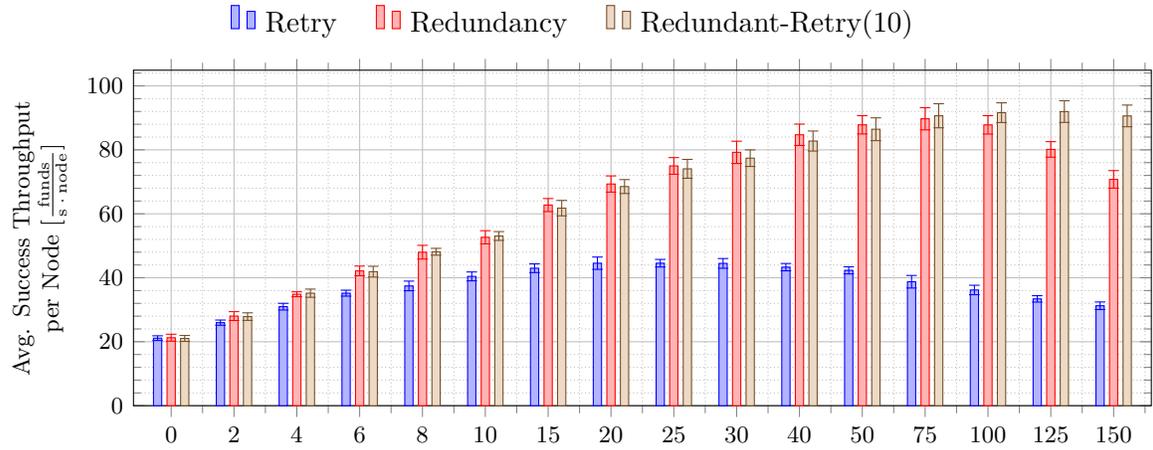
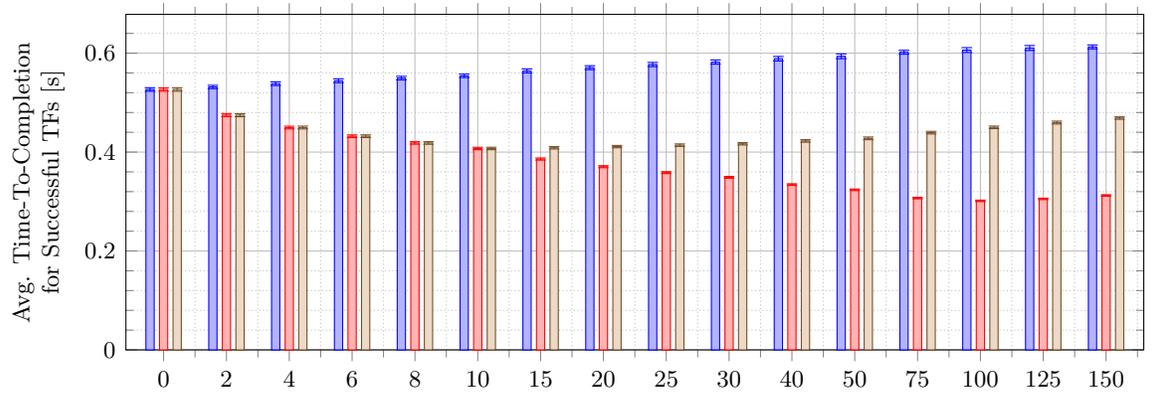
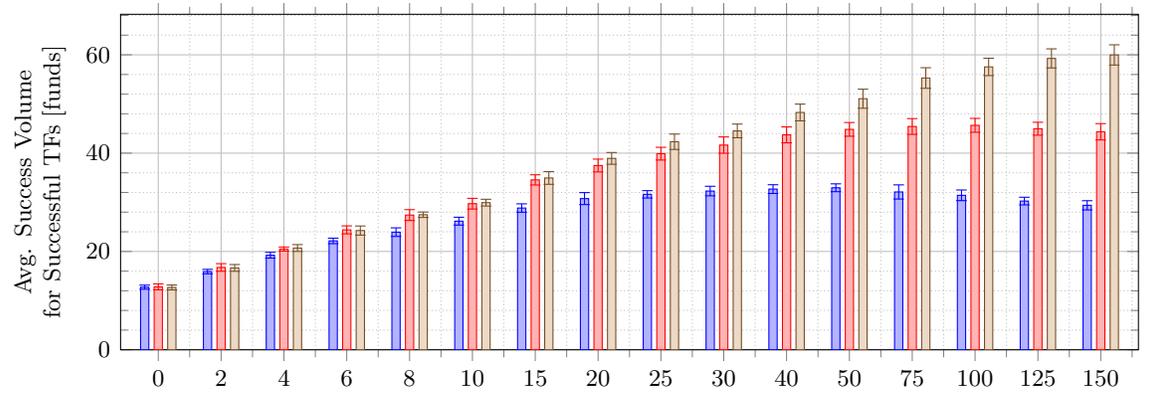
\begin{figure}[p!]%
    \begin{center}
        \ref{leg:sim-results-small-ttc-retry-02-amp-2} Retry
        \quad
        \ref{leg:sim-results-small-ttc-redundancy-02-amp-2} Redundancy
        \quad
        \ref{leg:sim-results-small-ttc-redundantretry-02-amp-2-10} Redundant-Retry(10)
    \end{center}
    \vspace{-0.5em}
    \begin{subfigure}{\linewidth}
        \centering
        \begin{tikzpicture}
            \begin{axis}[
                    myresultsplot01,
                    ylabel={Avg. Success Throughput\\per Node [\si{\xrp\per\second\per\nde}]},
                    xlabel={Max. Number of Additional TXs ($u$)},
                    symbolic x coords={0,2,4,6,8,10,15,20,25,30,40,50,75,100,125,150},
                    height=0.4\linewidth,   %
                    label style={align=center},
                    xlabel={},
                ]

                \def\PLOTDATAFILE{figures/02_nodes100_txs500_paths25_edges4.605170to6.907755_paths25.data}

                \foreach \thisColor/\thisLabel/\thisAlgorithm in {
                    myParula01Blue/Retry/retry-02-amp-2,
                    myParula02Orange/Redundancy/redundancy-02-amp-2,
                    myParula03Yellow/Redundant-Retry(10)/redundantretry-02-amp-2-10} {
                        \edef\temp{
                            \noexpand\addplot+ [
                                error bars/.cd,
                                y explicit,
                                y dir=both,
                            ] table [
                                x=u,
                                y expr=\noexpand\thisrow{throughput_success-mean},
                                y error expr=\noexpand\thisrow{throughput_success-std},
                                discard if not={algo}{\thisAlgorithm},
                            ] {\PLOTDATAFILE};
                            \noexpand\addlegendentry{\thisLabel};
                        }
                        \temp
                }

                \legend{};

            \end{axis}
        \end{tikzpicture}
        \caption{
        `Redundancy' and `Redundant-Retry(10)'
        show a 2x increase in throughput
        compared to `Retry'.
        }
        \label{fig:sim-results-throughput}
        \medskip
    \end{subfigure}
    \begin{subfigure}{\linewidth}
        \centering
        \begin{tikzpicture}
            \begin{axis}[
                    myresultsplot01,
                    ylabel={Avg. Time-To-Completion\\for Successful TFs [\si{\second}]},
                    xlabel={Max. Number of Additional TXs ($u$)},
                    symbolic x coords={0,2,4,6,8,10,15,20,25,30,40,50,75,100,125,150},
                    height=0.4\linewidth,   %
                    label style={align=center},
                    xlabel={},
                ]

                \def\PLOTDATAFILE{figures/02_nodes100_txs500_paths25_edges4.605170to6.907755_paths25.data}

                \foreach \thisColor/\thisLabel/\thisAlgorithm in {
                    myParula01Blue/Retry/retry-02-amp-2,
                    myParula02Orange/Redundancy/redundancy-02-amp-2,
                    myParula03Yellow/Redundant-Retry(10)/redundantretry-02-amp-2-10} {
                        \edef\temp{
                            \noexpand\addplot+ [
                                error bars/.cd,
                                y explicit,
                                y dir=both,
                            ] table [
                                x=u,
                                y expr=\noexpand\thisrow{ttc_for_successful_tx-mean},
                                y error expr=\noexpand\thisrow{ttc_for_successful_tx-std},
                                discard if not={algo}{\thisAlgorithm},
                            ] {\PLOTDATAFILE};
                            \noexpand\addlegendentry{\thisLabel};
                        }
                        \temp
                }

                \legend{};

            \end{axis}
        \end{tikzpicture}
        \caption{
        `Redundancy'
        shows a \SI{40}{\percent} reduction in TTC
        over `Retry'.
        `Redundant-Retry(10)' interpolates
        in TTC
        between
        `Retry' and `Redundancy'.
        }
        \label{fig:sim-results-ttc}
        \medskip
    \end{subfigure}
    \begin{subfigure}{\linewidth}
        \centering
        \begin{tikzpicture}
            \begin{axis}[
                    myresultsplot01,
                    ylabel={Avg. Success Volume\\for Successful TFs [\si{\xrp}]},
                    xlabel={Max. Number of Additional TXs ($u$)},
                    symbolic x coords={0,2,4,6,8,10,15,20,25,30,40,50,75,100,125,150},
                    height=0.4\linewidth,   %
                    label style={align=center},
                    xlabel={},
                ]

                \def\PLOTDATAFILE{figures/02_nodes100_txs500_paths25_edges4.605170to6.907755_paths25.data}

                \foreach \thisColor/\thisLabel/\thisAlgorithm in {
                    myParula01Blue/Retry/retry-02-amp-2,
                    myParula02Orange/Redundancy/redundancy-02-amp-2,
                    myParula03Yellow/Redundant-Retry(10)/redundantretry-02-amp-2-10} {
                        \edef\temp{
                            \noexpand\addplot+ [
                                error bars/.cd,
                                y explicit,
                                y dir=both,
                            ] table [
                                x=u,
                                y expr=\noexpand\thisrow{volume_for_successful_tx-mean},
                                y error expr=\noexpand\thisrow{volume_for_successful_tx-std},
                                discard if not={algo}{\thisAlgorithm},
                            ] {\PLOTDATAFILE};
                            \noexpand\addlegendentry{\thisLabel};
                        }
                        \temp
                }

                \legend{};

            \end{axis}
        \end{tikzpicture}
        \caption{
        `Redundant-Retry(10)' satisfies
        \SI{30}{\percent} larger TFs than `Redundancy'
        which in turn satisfies
        \SI{40}{\percent} larger TFs than the `Retry' baseline.
        }
        \label{fig:sim-results-size}
    \end{subfigure}
    \caption{
        Performance
        as a function of maximum number of additional TXs ($u$)
    }
    \label{fig:sim-results}
\end{figure}

}

\end{document}